\newtheorem{theorem}{Theorem}
\newtheorem{lemma}{Lemma}
\newtheorem{conjecture}{Conjecture}
\newtheorem{definition}{Definition}
\newcommand{\Per}{{\rm Per}}
\newcommand{\diag}{{\rm diag }}
\begin{document}

\title{Statistical Eigenmode Transmission over Jointly-Correlated MIMO Channels}
\author{
\vspace*{0.5cm}
\begin{center}
\authorblockN{Xiqi Gao$^{*}$,
              Bin Jiang$^*$, Xiao Li$^*$,
              Alex B. Gershman$^\dagger$,
              and Matthew R. McKay$^\ddagger$}
\vspace*{0.5cm}
\thanks{ $^*$ National Mobile Communications
Research Laboratory, Southeast University, Nanjing, China}
\thanks{ $^\dagger$ Department of Communication Systems, Technische
Universit\"at Darmstadt, Germany}
\thanks{ $^\ddagger$ Electronic and Computer Engineering Department, Hong Kong
University of Science and Technology, Hong Kong}
\end{center}
}
\maketitle
\vspace*{0.5cm}
\begin{abstract}
We investigate MIMO eigenmode transmission using statistical channel
state information at the transmitter. We consider a general
jointly-correlated MIMO channel model, which does not require
separable spatial correlations at the transmitter and receiver. For
this model, we first derive a closed-form tight upper bound for the
ergodic capacity, which reveals a simple and interesting
relationship in terms of the matrix permanent of the eigenmode
channel coupling matrix and embraces many existing results in the
literature as special cases. Based on this closed-form and tractable
upper bound expression, we then employ convex optimization
techniques to develop low-complexity power allocation solutions
involving only the channel statistics. Necessary and sufficient
optimality conditions are derived, from which we develop an
iterative water-filling algorithm with guaranteed convergence.
Simulations demonstrate the tightness of the capacity upper bound
and the near-optimal performance of the proposed low-complexity
transmitter optimization approach.
\end{abstract}

\begin{keywords}
Statistical eigenmode transmission, ergodic capacity, capacity
bound, {MIMO} channel, permanents, power allocation, convex
optimization.
\end{keywords}

\newpage

\section{Introduction}

Multiple-input multiple-output (MIMO) wireless systems, equipped
with multiple antennas at both the transmitter and the receiver,
have attracted tremendous interest in recent years as a means of
enabling substantially increased link capacity and reliability
compared with conventional systems
\cite{Telatar,Foschini_limit,Paulraj03,Gershman,Giannakis07,Stuber04}.
The performance of practical MIMO systems is characterized by
various system parameters, such as the average transmit power and
the transmit-receive antenna configurations, as well as various
channel phenomena such as spatial correlation, line-of-sight
components, thermal noise, interference, and Doppler effects due to
mobility. Each of these factors has an impact on the MIMO channel
capacity.

In realistic environments, where the channel characteristics may
vary significantly over time, substantial MIMO capacity benefits can
be obtained by tracking the states of the fading channels, and using
this channel state information (CSI) to optimally adapt the MIMO
transceiver parameters. However, such closed-loop MIMO strategies
require both the transmitter and receiver to acquire some form of
CSI. Whilst it is reasonable to assume that the instantaneous CSI
can be obtained accurately at the receiver through channel
estimation, whether or not this information can be obtained at the
transmitter depends highly on the application scenario. For example,
for fixed or low mobility applications, the channel conditions vary
relatively slowly, in which case instantaneous CSI can be fed to the
transmitter via well-designed feedback channels in frequency
division duplex (FDD) systems, or using the reciprocity of uplink
and downlink in time division duplex (TDD) systems. However, as the
mobility and hence the fading rate increases, obtaining accurate
instantaneous CSI at the transmitter becomes much more difficult.
For such a scenario, it is reasonable to exploit \emph{statistical}
CSI at the transmitter. The motivation for this approach stems from
the fact that the channel statistics vary over much larger time
scales than the instantaneous channel gains, and the uplink and
downlink statistics are usually reciprocal in both FDD and TDD
systems\cite{Barriac2004,Barriac2006}. Therefore, the statistical
information can be easily obtained by exploiting reciprocity, or by
employing feedback channels with significantly lower bandwidth
compared with instantaneous CSI feedback systems. In addition,
transceiver designs based on statistical information are typically
more robust to imperfections, such as delays, in the feedback
channel.

Capacity analysis and transceiver designs using the statistical CSI
at the transmitter are highly dependant on the channel modeling. The
most common approach has been to adopt the popular Kronecker model
\cite{Shui,Shin,Chiani,Cui_TC,Jin,Mckay,Jafar_comm,Jorswieck,Kiessling_vtc,Kiessling_diss,Boche_sp},
where the correlation between the fading of two distinct antenna
pairs is the product of the corresponding transmit and receive
correlations \cite{Shui,Kermoal}. The primary advantage of this
separable model is that it is analytically friendly, however various
measurement campaigns have demonstrated that it can have
deficiencies in practice \cite{Weichselberger,Zhou}. To overcome
these deficiencies, more generalized channel models have been
proposed, including the virtual channel representation
\cite{Sayeed,Veeravalli}, the unitary-independent-unitary (UIU)
model \cite{Tulino2005,Tulino2006}, and Weichselberger's model
\cite{Weichselberger}. In contrast to the Kronecker model, these are
\emph{jointly-correlated} channel models which not only account for
the correlation at both link ends, but also characterize their
mutual dependence.

Under various assumptions on the system configuration and channel
model, several important works have been reported on the transmitter
optimization problem using the statistical CSI at the transmitter in
recent years. In particular, for multi-input and single-output
(MISO) wireless channels with correlated Rayleigh or uncorrelated
Rician entries, it was shown in \cite{Visotsky} that the
capacity-achieving strategy is to send independent data streams in
the directions defined by the dominant eigenvectors of the transmit
correlation matrix. This result was extended to Rayleigh fading MIMO
channels with Kronecker correlation structure in
\cite{Jafar_comm,Jorswieck,Kiessling_vtc,Kiessling_diss}, to
uncorrelated Rician MIMO channels in \cite{Venkatesan,Hosli}, to the
UIU model in \cite{Tulino2005,Tulino2006}, and to the virtual
channel representation in \cite{Veeravalli}. These prior
contributions have also considered the task of optimally allocating
power across the transmit eigendirections (i.e.\ defining the
eigenvalues of the optimal transmit covariance matrix), for
maximizing capacity. However, in most cases, the power allocation
problem has been tackled by optimizing the exact ergodic capacity
expression, and this approach has led to computationally-involved
numerical optimization procedures. For example, see \cite{Vu,Hanlen}
for Kronecker channels, and \cite{Tulino2006} for jointly-correlated
UIU channels. In these contributions, iterative power allocation
approaches were presented which involved numerical averaging over
channel samples for each iteration of the algorithm.

In this paper, we investigate the statistical eigenmode transmission
(SET) over a general jointly-correlated MIMO channel, using the
statistical CSI at the transmitter. Our idea is to first derive a
\emph{closed-form} tight upper bound for the ergodic capacity of the
general jointly-correlated MIMO channel model. This upper bound
expression reveals a simple and interesting relationship in terms of
matrix permanents, and embraces many existing results in the
literature as special cases, such as those presented for Kronecker
channels in \cite{Shin,Kiessling_vtc,Kiessling_diss,Jin,Mckay}.
Based on this closed-form and tractable upper bound expression, we
then employ convex optimization techniques to develop low-complexity
power allocation solutions in terms of only the channel statistics.
We derive necessary and sufficient optimality conditions, and
propose a simple computation algorithm, inspired by the iterative
water-filling techniques presented previously for transmitter
optimization of multiuser systems \cite{Yu,Jindal}, which is shown
to converge within only a few iterations. Numerical simulations
demonstrate the tightness of the capacity upper bound and the
near-optimal performance of the proposed low-complexity transmitter
optimization approach, i.e., suffering negligible loss with respect
to the ergodic capacity of the jointly-correlated MIMO channel.

\subsection{Notation}
The following notation is adopted throughout the paper: Upper
(lower) bold-face letters are used to denote matrices (column
vectors); in some cases, where it is not clear, we will employ
subscripts to emphasize dimensionality. The $N \times N$ identity
matrix is denoted by ${\bf{I}}_N $, the all-zero matrix is denoted
by ${\bf{0}}$, and the all-one matrix is denoted by ${\bf{1}}$. The
superscripts $(\cdot)^{H}$, $(\cdot)^{T}$, and $(\cdot)^{*}$ stand
for the conjugate-transpose, transpose, and conjugate operations,
respectively. We employ $\mathbb{E}\{\cdot\}$ to denote expectation
with respect to all random variables within the brackets, and use
${\bf A}\odot{\bf B}$ to denote the Hadamard product of the two
matrices $\bf A$ and $\bf B$. We use $[{\bf A}]_{kl}$ or the
lower-case representation $a_{k,l}$ to denote the ($k$,$l$)-th entry
of the matrix $\bf A$, and $a_k$ denotes the $k$-th entry of the
column vector $\bf{a}$. The operators ${\rm tr}(\cdot)$,
$\det(\cdot)$, and $\Per(\cdot)$ represent the matrix trace,
determinant, and permanent, respectively, and $\diag(\bf{x})$
denotes a diagonal matrix with $\bf{x}$ along its main diagonal.

We will use $\mathcal {S}_N^{k}$ to denote the set of all size-$k$
permutations of the numbers $\{1,2,\ldots ,N\}$, where $k\leq N$. By
using the notation ${\hat{\alpha}_{k}}\in \mathcal{S}_N^{k}$, we
mean that ${\hat{\alpha}_k}=({\alpha}_1,{\alpha}_2,\ldots,{\alpha}_k
)$, ${\alpha}_i \in \{1,2,\ldots ,N\}$ for $1\leq i\leq k$, and
${\alpha}_i \neq {\alpha}_j$ for $1\leq i,j \leq k$ and $i\neq j $.
We will use $\mathcal {S}_N^{(k)}$ to denote the set of all ordered
length-$k$ subsets of the numbers $\{1,2,\ldots ,N\}$. By the
notation ${\hat{\alpha}}_k\in \mathcal{S}_N^{(k)}$, we mean that
${\hat {\alpha}}_k=({\alpha}_1,{\alpha}_2,\ldots ,{\alpha}_k )$,
${\alpha}_i \in \{1,2,\ldots ,N\}$ for $1\leq i\leq k$, and
${\alpha}_1 < {\alpha}_2 <\ldots < {\alpha}_k$. The cardinalities of
the sets $\mathcal {S}_N ^k $ and $\mathcal {S}_N^{(k)}$ are
$\frac{N!}{(N-k)!}$ and $\frac{N!}{k!(N-k)!}$ respectively.

With ${\hat{\alpha}_{k}}$ and ${\hat{\beta}_{k}}$ defined as above,
we will use ${\bf A}^{\hat{\alpha}_{k}}_{\hat{\beta}_{k}}$ to denote
the sub-matrix of an $M\times N$ matrix $\bf A$ obtained by
selecting the rows and columns indexed by $\hat \alpha _k$ and $\hat
\beta _k$ respectively. $\mathbf{A}^{\hat \alpha_N }$ will denote
the sub-matrix of $\mathbf{A}$ obtained by selecting the rows
indexed by $\hat \alpha _N$ when $M \geq N $, and $\mathbf{A}_{\hat
\beta _M }$ the sub-matrix of $\mathbf{A}$ obtained by selecting the
columns indexed by $\hat \beta _M$ when $M\leq N$. Also, we will use
${\hat{\alpha}^{'}_{k}}$ and ${\hat{\beta}^{'}_{k}}$ to denote the
sequences complementary to ${\hat{\alpha}_{k}}$ and
${\hat{\beta}_{k}}$ in $\{1,2,\ldots, M\}$ and $\{1,2,\ldots , N\}$,
respectively.  As such, ${\bf
A}^{\hat{\alpha}^{'}_{k}}_{\hat{\beta}^{'}_{k}}$ will represent the
sub-matrix of $\bf A$ obtained by deleting the rows and columns
indexed by $\hat \alpha _k$ and $\hat \beta _k$, respectively.
Finally, for notational convenience, we will use ${\bf A}_{i,j}$ to
represent the sub-matrix of $\bf A$ obtained by deleting its $i$-th
row and $j$-th column.

\section{Channel Model and Statistical Eigenmode Transmission}\label{sec:model}
\subsection{Channel Model}
We consider a single-user MIMO link with $N_t$ transmit and $N_r$
receive antennas, operating over a frequency-flat fading channel.
The $N_r$-dimensional complex baseband received signal vector for a
single symbol interval can be written as
\begin{equation}\label{SIGNAL_MODEL}
{\bf{y}} = {\bf{Hx}} + {\bf{n}},
\end{equation}
where $\bf x$ is the $N_t\times 1$ transmitted signal vector, $\bf
H$ is the $N_r \times N_t$ channel matrix with $(i,j)$-th element
representing the complex fading coefficient between the $j$-th
transmit and $i$-th receive antenna, and $\bf n$ is the $N_r\times
1$ zero-mean additive complex Gaussian noise vector with
$\mathbb{E}\left\lbrace {\bf n} {\bf n}^{H}\right\rbrace =\sigma
_{n}^{2} {\bf I}_{N_r}$. It is assumed that $\bf x$ and $\bf H$
satisfy the following power constraints
\begin{align}\label{Constraint on transmit power}
\mathbb{E}\left\{ {\rm tr} \left( {\bf{x}} {\bf{x}}^H \right)
\right\} &= P, \\ \label{Constraint on channel power}
\mathbb{E}\left\{ {\rm tr} \left( {\bf{H}}{\bf{H}}^H \right)
\right\} &= N_t N_r.
\end{align}
We define the transmit signal to noise ratio (SNR) as $\rho =
P/\sigma _{n}^{2}$. If the total transmitted power $P$ is equally
distributed across all transmit antennas, so that
$\mathbb{E}\left\lbrace {\bf x} {\bf x}^H \right\rbrace =\left(
P/N_t \right) {\bf I}_{N_t}$, then $\rho$ also corresponds to the
average SNR per receive antenna.

For the jointly-correlated MIMO channel which we consider throughout
this paper, the channel matrix $\bf H$ is given by
\begin{align}
{\bf {H}}&={\bf{U}}_{r} {\bf \tilde{H}}  {\bf {U}}_{t}^{H}  \nonumber \\
&={\bf{U}}_{r} \left( {\bf D} + {\bf{M}} \odot {{\bf{H}}_{iid}}
\right) {\bf {U}}_{t}^{H}, \label{channel_matrix}
\end{align}
where ${\bf \tilde{H}}={\bf D} + {\bf{M}} \odot {{\bf{H}}_{iid}}$,
${\bf{U}}_{t}$ and ${\bf{U}}_{r}$ are $N_t \times N_t$ and
$N_r\times N_r$ deterministic unitary matrices, ${\bf D}$ is an $N_r
\times N_t$ deterministic matrix with at most one nonzero element in
each row and each column, $\bf M$ is an  $N_r\times N_t$
deterministic matrix with nonnegative elements, and ${\bf{H}}_{iid}
$ is an $N_r\times N_t$ random matrix with elements having zero mean
and independent identical distributions (i.i.d.). Note that we do
not constrain the elements of ${\bf{H}}_{iid} $ to be Gaussian.
Without any loss of generality, we can assume that the nonzero
elements of ${\bf D}$ are real, with indices $\left( i, i \right) $
for $1 \leq i \leq {\rm min} (N_t,N_r)$. Let us define
\begin{equation} \label{Channel coupling matrix}
{\bf \Omega }= \mathbb{E} \{ {\bf \tilde {H}}\odot {\bf \tilde
{H}}^{*} \}.
\end{equation}
It is easy to show that
\begin{equation} \label{Channel coupling matrix expression}
{\bf \Omega } = {\bf D} \odot {\bf D} + {\bf M}
\odot {\bf M},
\end{equation}
and the power constraint (\ref{Constraint on channel power}) can be
rewritten as
\begin{equation}
\sum\limits_{i = 1}^{N_r} \sum\limits_{j = 1}^{N_t} [{\bf \Omega
}]_{ij} = N_t N_r \; .
\end{equation}

From (\ref{channel_matrix}), the transmit and receive correlation
matrices can be expressed as
\begin{equation}\label{Transmit correlation matrix}
{\bf R}_t = \mathbb{E} \{ {\bf H}^{H} {\bf H} \}= {\bf U}_{t} {\bf
\Lambda}_{t} {\bf U}_{t}^{H},
\end{equation}
\begin{equation}
\label{Receive correlation matrix} {\bf R}_r = \mathbb{E} \{ {\bf H}
{\bf H}^{H} \}= {\bf U}_{r} {\bf \Lambda}_{r} {\bf U}_{r}^{H},
\end{equation}
where ${\bf \Lambda}_{t} $ and ${\bf \Lambda}_{r} $ are diagonal
matrices with $ \left[  {\bf \Lambda}_{t} \right] _{ii} =
\sum_{j=1}^{N_r } \left[ \bf \Omega \right] _{ji}$ and $ \left[ {\bf
\Lambda}_{r} \right]  _{ii} = \sum_{j=1}^{N_t } \left[ \bf \Omega
\right]_{ij}$. This implies that in the channel model defined in
(\ref{channel_matrix}), ${\bf U}_t$ and ${\bf U}_r$ are the
eigenvector matrices of the transmit and receive correlation
matrices, respectively. These matrices are characterized by the
transmit and receive antenna configurations. For example, when
uniform linear arrays (ULA) are employed at both the transmitter and
receiver, it is shown in \cite{Sayeed} that the eigenvector matrices
can be set to discrete Fourier transform (DFT) matrices.

The statistics of ${\bf \tilde H}={\bf U}_{r}^{H}{\bf H}{\bf U}_{t}$
characterize realistic propagation environments. From
(\ref{channel_matrix}) and (\ref{Channel coupling matrix
expression}), we have
\begin{align}
{\bf D}&=\mathbb{E} \{ \bf \tilde H  \}, \\
[{\bf M}]_{ij}&=\sqrt{{\rm var}\{[{ \bf \tilde H }]_{ij}\}}
\nonumber
\\&=\sqrt{[{\bf \Omega}]_{ij}-{[\bf D]}_{ij}^2},
\end{align}
where ${\rm var}\{\cdot\}$ denotes variance. The matrices $\bf D$
and $\bf M$ reflect the line-of-sight (LOS) and scattering
components of the channel, respectively. The $(i,j)$-th element of
${\bf \Omega }$, i.e.\ $[{\bf \Omega }]_{ij}$, corresponds to the
average power of $[{\bf \tilde {H}}]_{ij}$ and captures the average
coupling between the $i$-th receive eigenmode and $j$-th transmit
eigenmode. For this reason, we refer to ${\bf \Omega }$ as the {\it
eigenmode channel coupling matrix}. It can be seen that the
eigenvalues of the transmit and receive correlation matrices are
summations of the elements of the matrix ${\bf \Omega }$ in each
column and each row, respectively. These eigenvalues are
\emph{non-separable}, which reflects the joint correlation feature
of the channel.

The channel model described by (\ref{channel_matrix}) provides a
general formula which embraces many existing channel models
\cite{Shui,Kermoal,Sayeed,Zhou,Weichselberger,Tulino2005,Tulino2006,Veeravalli}.
For example, if $\bf D =0$, $\bf M$ is a rank-one matrix, and ${\bf
H}_{iid}$ has Rayleigh-faded elements, then (\ref{channel_matrix})
reduces to the popular separable-correlation Kronecker model
\cite{Shui,Kermoal}. By allowing $\bf M$ to have arbitrary rank and
fixing ${\bf U}_{t}$ and ${\bf U}_{r} $ to be DFT matrices, one can
achieve the virtual channel representation for ULAs \cite{Sayeed}.
If we further allow ${\bf U}_{t}$ and ${\bf U}_{r}$ to be arbitrary
unitary matrices, one can obtain Weichselberger's channel
model\cite{Weichselberger}. Moreover, by setting $\bf D = 0$ we
arrive at the unitary-independent-unitary (UIU) model introduced in
\cite{Tulino2005,Tulino2006}. Our model is also related to the model
in \cite{Veeravalli}, where one LOS component was included in the
virtual channel representation for the ULA MIMO channels. Here, we
allow multiple LOS components in eigenmode to cover more general
transmission links, such as those in distributed radio networks
\cite{You05}.

\subsection{Statistical Eigenmode Transmission}
\label{sec: Statistic Eigenmode Transmission}

Throughout the paper, we assume that the receiver knows the channel
perfectly, whilst the transmitter only has access to the statistical
parameters, including ${\bf U}_{t}$, ${\bf U}_{r}$, ${\bf D}$ and
${\bf M}$ (and thus $\bf \Omega$). Under these assumptions, the
ergodic capacity of the MIMO channel is achieved by selecting the
transmitted signal vector $\bf x$ to have zero mean and to follow a
proper Gaussian distribution \cite{Telatar}. Let the covariance
matrix of $\bf x$ be $ \mathbb{E}\left\{ {{\mathbf{xx}}^H }
\right\}=\left( P/N_t \right) {\mathbf{Q}} $. Then the power
constraint on  $\bf x$ can be rewritten as ${\rm tr}\left(
{\mathbf{Q}}\right)=N_t$, and the ergodic capacity is given by
\begin{equation}\label{CAPACITY}
C = \mathop {\max }\limits_{{\bf Q}\succeq{\bf 0},\text{tr}\left(
{\mathbf{Q}} \right) = N_t} \mathbb{E}\left\{ {\log \det \left(
{{\mathbf{I}}_{N_r} + \gamma {{\mathbf{HQH}}^H  }} \right)}
\right\},
\end{equation}
where $\gamma=\frac{\rho}{N_t}$. Substituting (\ref{channel_matrix})
into (\ref{CAPACITY}) yields
\begin{equation}\label{CAPACITY_vitural}
C =\mathop {\max }\limits_{{\bf \tilde Q}\succeq{\bf
0},\text{tr}\left( {\mathbf{\tilde Q}} \right) = N_t}
\mathbb{E}\left\{ {\log \det \left( {{\mathbf{I}}_{N_r} + \gamma
{{\mathbf{{\tilde H} {\tilde Q}{\tilde H}}}^H }} \right)} \right\},
\end{equation}
where ${\tilde {\bf Q}}={\bf U}_{t}^{H} {\bf Q} {\bf U}_{t}$. Let
${\bf Q}={\bf U \Lambda U}^{H}$, where $\bf U$ is the eigenvector
matrix, and ${\bf
\Lambda}=\diag(\lambda_1,\lambda_2,\ldots,\lambda_{N_t})$ is a
diagonal matrix of the corresponding eigenvalues. When $\bf\tilde H$
has independent and symmetrically distributed elements, it has been
shown in \cite{Tulino2006} and \cite{Veeravalli} that the optimal
eigenvector matrix $\bf U$ for achieving the capacity is ${\bf
U}={\bf U}_{t}$, and thus ${\tilde {\bf Q}}$ is diagonal. In
\cite{Veeravalli}, it has been pointed out that this solution also
applies when one element of $\bf\tilde H$ contains a LOS component.
We note however, that the channel model given by
(\ref{channel_matrix}) allows for multiple possible LOS components.
For this more general case, one can arrive at the following result:
\begin{theorem}\label{optimal_trasmit_direction}
The eigenvector matrix of the capacity-achieving matrix $\bf Q$ for
the jointly-correlated channel (\ref{channel_matrix}) is given by
${\bf{U}} = {\bf{U}}_t$. The ergodic capacity can therefore be
expressed as
\begin{equation}\label{capacity_optimized}
C =\mathop {\max }\limits_{{\boldsymbol \lambda}\geq {\bf
0},\sum_{i=1}^{N_t} \lambda_i = N_t} \mathbb{E}\left\{ {\log \det
\left( {{\mathbf{I}}_{N_r} + \gamma {{\mathbf{\tilde
H}\,{\diag({\boldsymbol \lambda})}{\bf \tilde H}}}^{H} } \right)}
\right\},
\end{equation}
where $\boldsymbol \lambda$ is an $N_t\times 1$ vector containing
the eigenvalues $\lambda_i$, $i=1,2,\ldots , N_t$.
\end{theorem}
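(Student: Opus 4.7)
The plan is to show that for every feasible $\tilde{\bf Q}\succeq{\bf 0}$ with ${\rm tr}(\tilde{\bf Q})=N_t$, there is a diagonal $\bar{\bf Q}$ of the same trace whose ergodic mutual information in (\ref{CAPACITY_vitural}) is at least as large. Once this is established, the maximization is attained on the diagonal slice, whence ${\bf Q}={\bf U}_t\tilde{\bf Q}{\bf U}_t^H$ has ${\bf U}_t$ as eigenvector matrix and (\ref{CAPACITY_vitural}) reduces to (\ref{capacity_optimized}) with $\boldsymbol\lambda$ collecting the diagonal of $\bar{\bf Q}$.

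The mechanism I would exploit is a two-sided diagonal phase symmetry of $\tilde{\bf H}$. Introduce $\boldsymbol{\Phi}_t={\rm diag}(e^{j\beta_1},\ldots,e^{j\beta_{N_t}})$ and $\boldsymbol{\Phi}_r={\rm diag}(e^{j\alpha_1},\ldots,e^{j\alpha_{N_r}})$. Because ${\bf D}$ is zero off the positions $(i,i)$ with $i\leq\min(N_t,N_r)$, imposing $\alpha_i=\beta_i$ on those indices yields $\boldsymbol{\Phi}_r^H{\bf D}\boldsymbol{\Phi}_t={\bf D}$. Taking the i.i.d.\ entries of ${\bf H}_{iid}$ to be circularly symmetric (the standing assumption in \cite{Tulino2006,Veeravalli} that was used to establish the diagonal case), the random part satisfies $\boldsymbol{\Phi}_r^H({\bf M}\odot{\bf H}_{iid})\boldsymbol{\Phi}_t\stackrel{d}{=}{\bf M}\odot{\bf H}_{iid}$ for any phases. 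Combining these gives $\boldsymbol{\Phi}_r^H\tilde{\bf H}\boldsymbol{\Phi}_t\stackrel{d}{=}\tilde{\bf H}$.

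Next I would write
\begin{align*}
{\bf I}_{N_r}+\gamma\tilde{\bf H}\tilde{\bf Q}\tilde{\bf H}^H =\boldsymbol{\Phi}_r\bigl({\bf I}_{N_r}+\gamma(\boldsymbol{\Phi}_r^H\tilde{\bf H}\boldsymbol{\Phi}_t)(\boldsymbol{\Phi}_t^H\tilde{\bf Q}\boldsymbol{\Phi}_t)(\boldsymbol{\Phi}_r^H\tilde{\bf H}\boldsymbol{\Phi}_t)^H\bigr)\boldsymbol{\Phi}_r^H,
\end{align*}
take $\log\det$, use $|\det\boldsymbol{\Phi}_r|=1$, and invoke the distributional identity above to obtain
\begin{align*}
\mathbb{E}\{\log\det({\bf I}_{N_r}+\gamma\tilde{\bf H}\tilde{\bf Q}\tilde{\bf H}^H)\} =\mathbb{E}\{\log\det({\bf I}_{N_r}+\gamma\tilde{\bf H}(\boldsymbol{\Phi}_t^H\tilde{\bf Q}\boldsymbol{\Phi}_t)\tilde{\bf H}^H)\}
\end{align*}
for every admissible $\boldsymbol\beta$. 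Averaging the right-hand side over independent $\beta_1,\ldots,\beta_{N_t}\sim U[0,2\pi)$, swapping expectations by Fubini, and invoking Jensen's inequality for the concave map ${\bf X}\mapsto\log\det({\bf I}+\gamma\tilde{\bf H}{\bf X}\tilde{\bf H}^H)$ yields
\begin{align*}
\mathbb{E}\{\log\det({\bf I}_{N_r}+\gamma\tilde{\bf H}\tilde{\bf Q}\tilde{\bf H}^H)\}\leq\mathbb{E}\{\log\det({\bf I}_{N_r}+\gamma\tilde{\bf H}\bar{\bf Q}\tilde{\bf H}^H)\},
\end{align*}
where $\bar{\bf Q}=\mathbb{E}_{\boldsymbol\beta}[\boldsymbol{\Phi}_t^H\tilde{\bf Q}\boldsymbol{\Phi}_t]$. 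An entry-by-entry calculation shows $\bar{\bf Q}={\rm diag}(\tilde Q_{11},\ldots,\tilde Q_{N_tN_t})$ since the $(i,j)$ off-diagonal entry carries $\mathbb{E}[e^{j(\beta_j-\beta_i)}]=0$ for $i\neq j$; the trace and positive semidefiniteness are preserved, so $\bar{\bf Q}$ is feasible.

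The main obstacle is verifying that, with \emph{multiple} LOS entries in ${\bf D}$ (the extension beyond \cite{Veeravalli}), the constraint $\alpha_i=\beta_i$ does not destroy the phase randomness needed to kill every off-diagonal term. The key observation is that this constraint merely slaves each relevant $\alpha_i$ to $\beta_i$ without restricting any $\beta_j$, so all $N_t$ transmit phases stay independently uniform on $[0,2\pi)$, which is exactly what makes each cross phase $\beta_j-\beta_i$ $(i\neq j)$ uniformly distributed. A minor technical point, not made explicit in the model (\ref{channel_matrix}) but needed here as in \cite{Tulino2005,Tulino2006,Veeravalli}, is the circular symmetry of the i.i.d.\ entries of ${\bf H}_{iid}$, which I would state as a standing assumption to legitimize the distributional identity on the random part.
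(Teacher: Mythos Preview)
Your argument is correct and is precisely the phase-symmetrization plus Jensen approach of \cite{Visotsky,Hosli,Tulino2006,Veeravalli} that the paper invokes when it omits the proof; in particular, your observation that slaving $\alpha_i=\beta_i$ on the LOS positions leaves all $N_t$ transmit phases independently uniform is exactly the one-line extension needed to cover multiple diagonal entries in ${\bf D}$. Your flag that circular symmetry of the i.i.d.\ entries must be assumed matches the paper's own caveat (``independent and symmetrically distributed'') preceding the theorem.
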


The proof follows similar approaches to those used in
\cite{Visotsky,Hosli,Tulino2006,Veeravalli} and is therefore
omitted. Theorem \ref{optimal_trasmit_direction} demonstrates that
the optimal signaling directions are defined by the eigenvectors of
the transmit correlation matrix of the MIMO channel.  This agrees
with and extends prior results in the literature to the more general
channel model given by (\ref{channel_matrix}). For the transmitter
optimization problem, the major remaining challenge is to determine
the eigenvalues of the capacity-achieving input covariance matrix
$\mathbf{Q}$. This is equivalent to the task of optimally allocating
the available transmit power budget over the optimized transmit
eigen-directions, determined in Theorem
\ref{optimal_trasmit_direction}.

In general, it is very difficult to derive an exact closed-form
solution for the power allocation problem.  A major source of this
difficulty is due to the complexity in evaluating tractable
closed-form solutions for the expectation in
(\ref{capacity_optimized}). This is also the case for many other
less general MIMO channel scenarios, such as the popular Kronecker
correlation model \cite{Jafar_comm,Jorswieck}.  As such, the
standard approach has been to develop numerical optimization
techniques (see e.g.\ \cite{Vu} and \cite{Hanlen}).

In this paper, considering the general jointly-correlated MIMO
channel model, we develop a new approach which leads to the design
of simple, robust and practical power allocation solutions. In
particular, our approach is based on deriving a tight closed-form
upper bound on the expectation in (\ref{capacity_optimized}) which
can then serve as an approximation to the capacity. Based on this
expression, we are then able to derive new optimized power
allocation solutions which are simple and fast to compute. These
power allocation solutions will be shown to serve as very accurate
approximations to the optimal capacity-achieving solution, with low
computational complexity requirements.

We note that the power-allocation problem for jointly-correlated
channel scenarios has also been considered in \cite{Tulino2006},
where necessary and sufficient conditions as well as an iterative
numerical algorithm were proposed. One drawback of that algorithm is
that for each iteration it requires numerically averaging certain
random matrix structures involving the inverse of instantaneous
realizations of the MIMO channel. Moreover, since the computation
algorithm requires access to instantaneous MIMO CSI, then under the
statistical-feedback assumption, such power-allocation computations
must be typically performed at the receiver. In contrast, in this
paper we develop more practically appealing power-allocation
algorithms which involve only the channel statistics. As such, they
are simpler and more efficient to compute, since they do not require
random matrix averaging during the power-allocation computation.
Moreover, our new power-allocation algorithm has the additional
advantage of permitting computation at either the receiver or the
transmitter. This extra flexibility is particularly important for
various practical applications, such as downlink transmission where
it is often desirable or necessary to restrict computations to the
base station.

We start by rewriting the ergodic capacity
(\ref{capacity_optimized}) as
\begin{equation}\label{capacity_mutual information}
C =\mathop {\max }\limits_{{\boldsymbol \lambda}\geq {\bf
0},\sum_{i=1}^{N_t} \lambda_i = N_t} {\Tilde C}(\boldsymbol
\lambda),
\end{equation}
where
\begin{equation}\label{mutual information}
{\Tilde C}(\boldsymbol \lambda)= \mathbb{E}\left\{ {\log \det \left(
{{\mathbf{I}}_{N_r} + \gamma {{\mathbf{\tilde
H}\,{\diag({\boldsymbol \lambda})}{\bf \tilde H}}}^{H} } \right)}
\right\}
\end{equation}
is the expected mutual information between the transmitted signal
$\bf x$ and the received signal $\bf y$ under SET. Due to the
concavity of the $\log (\cdot)$ function, the mutual information
${\Tilde C}(\boldsymbol \lambda)$ is upper bounded by
\begin{equation}\label{mutual information upbound}
{\Tilde C}(\boldsymbol \lambda)\leq {\Tilde C}_u(\boldsymbol
\lambda)=\log \mathbb{E}\left\{ \det \left( {{\mathbf{I}}_{N_r} +
\gamma {{\mathbf{\tilde H}\,{\diag({\boldsymbol \lambda})}{\bf
\tilde H}}}^{H} } \right) \right\}.
\end{equation}
Thus, the ergodic capacity is upper bounded by
\begin{equation}\label{capacity upbound}
C\leq C_u=\mathop {\max }\limits_{{\boldsymbol \lambda}\geq {\bf
0},\sum_{i=1}^{N_t} \lambda_i = N_t} {\Tilde C}_u(\boldsymbol
\lambda).
\end{equation}
For the case of Kronecker MIMO channels, it has been shown in
\cite{Shin,Kiessling_vtc,Kiessling_diss,Jin,Mckay} that such bounds
are very tight and admit closed-form expressions by using the
expansion of the determinant.

\section{Closed-Form Capacity Upper Bound using Permanents}\label{sec:Upper Bound}

In this section, we derive a closed-form expression for the capacity
upper bound (\ref{capacity upbound}) for the jointly-correlated MIMO
channel model in (\ref{channel_matrix}). We also develop algorithms
for its efficient computation. The upper bound derivation is based
heavily on exploiting linear-algebraic concepts and properties of
\emph{matrix permanents}, which we introduce and develop in the
sequel.

\subsection{Matrix Permanents: Definitions and Properties}

The permanent of a matrix is defined in a similar fashion to the
determinant. The primary difference is that when taking the
expansion over minors, all signs are positive
\cite{Minc,Ryser,Nijenhuis,Liang}. The permanents of square matrices
have been thoroughly investigated in linear algebra and various
applied fields. The permanents of $M\times N$ matrices with $M\leq
N$ have also been defined and investigated\cite{Minc}. In this
paper, to facilitate our capacity upper bound derivation we find it
necessary to extend the definition of permanents to allow for
arbitrary $M$ and $N$, and provide their useful properties.

\begin{definition} \label{def:Permanent}
For an $M \times N$ matrix $\bf A$, the permanent is defined as
\begin{equation}
\label{definition_permanents}
\Per {\left( \bf A \right)}=
\left\{\begin{array}{cc} \sum\limits_{{\hat{\alpha}_{M}}\in
\mathcal {S}_{N}^M}\prod\limits_{i=1}^{M} a_{i,{\alpha}_{i}}, &  M\leq N \\
\sum\limits_{{\hat{\beta}}_{N}\in \mathcal
{S}_{M}^N}\prod\limits_{i=1}^{N} a_{{\beta}_{i},i}, & M > N,
\end{array} \right.
\end{equation}
where $a_{i,j}$ denotes the $(i,j)$-th element of $\bf A$.
\end{definition}

From this definition, one can easily establish a number of important
properties of the matrix permanent, as given in the following lemma.
These properties will be useful in subsequent derivations.
\begin{lemma} \label{Per:Basic properties }
Let $\bf A $ be an $M \times N$ matrix, $\bf a$ an $M \times 1$
vector, $\bf b $ an $N \times 1$ vector, and $\mu$ a scale constant.
Then
\begin{align}\label{Per: property 1}
 \Per ({\bf A})&= \Per ({\bf A}^T) \\
\label{Per: property 2}
 \Per ({\bf a})&= \sum_{i=1}^{M}a_i\\
\label{Per: property 3}
 \Per ({\rm diag} (\bf a))&=\det ({\rm diag} (\bf a))\\
\label{Per: property 4}
 \Per (\mu {\bf A})&= \mu^{\min (M,N)}  \Per ({\bf
A})\\
\label{Per: property 5}
 \Per ({\rm diag} {(\bf a)}{\bf A })&=\det({\rm diag} { (\bf
a))} \Per ({\bf A}), \  M \leq N\\
\label{Per: property 6}
 \Per ({\bf A} {\rm diag} ({\bf b}) )&= \det({\rm diag} ({\bf
b})) \Per ({\bf A}),\  M \geq N.
\end{align}
\end{lemma}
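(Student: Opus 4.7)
The plan is to establish each of the six identities directly from Definition~\ref{def:Permanent}, since every one of them reduces to a short symbolic manipulation of the permutation-indexed sum. The key observation is that the definition treats the $M\le N$ and $M>N$ cases by summing, respectively, over ordered injections from $\{1,\dots,M\}$ into $\{1,\dots,N\}$ and over ordered injections from $\{1,\dots,N\}$ into $\{1,\dots,M\}$, so each proof splits naturally into these two cases.

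First I would handle \eqref{Per: property 1}: for $M\le N$, $\bf A^T$ is $N\times M$ with $M\le N$ so the second branch of the definition applies, giving $\Per({\bf A}^T)=\sum_{\hat\beta_M\in\mathcal S_N^M}\prod_{i=1}^M [{\bf A}^T]_{\beta_i,i}=\sum_{\hat\beta_M\in\mathcal S_N^M}\prod_{i=1}^M a_{i,\beta_i}$, which is exactly $\Per({\bf A})$; the case $M>N$ is symmetric. Property \eqref{Per: property 2} is then the special case $N=1$ of the second branch of the definition. Property \eqref{Per: property 3} follows because, for $\diag({\bf a})$, only the identity permutation indexes a nonzero product of diagonal entries, and that product coincides with the determinant of a diagonal matrix. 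Property \eqref{Per: property 4} is immediate once one notes that every term of the permanent sum is a product of exactly $\min(M,N)$ entries of $\bf A$, so the scalar $\mu$ is pulled out with that exponent.

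For \eqref{Per: property 5} I would use $[\diag({\bf a}){\bf A}]_{i,j}=a_i a_{i,j}$, so in the sum
\begin{equation*}
\Per\!\bigl(\diag({\bf a}){\bf A}\bigr)=\sum_{\hat\alpha_M\in\mathcal S_N^M}\prod_{i=1}^M a_i\,a_{i,\alpha_i}=\Bigl(\prod_{i=1}^M a_i\Bigr)\sum_{\hat\alpha_M\in\mathcal S_N^M}\prod_{i=1}^M a_{i,\alpha_i},
\end{equation*}
the factor $\prod_{i=1}^M a_i=\det(\diag({\bf a}))$ comes out of the sum because it is independent of the permutation index; what remains is $\Per({\bf A})$. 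Property \eqref{Per: property 6} then follows either by the analogous manipulation (using the second branch of the definition, with $[{\bf A}\diag({\bf b})]_{i,j}=b_j a_{i,j}$ and the product running over $j$), or more efficiently by combining \eqref{Per: property 1} with \eqref{Per: property 5}.

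There is no serious technical obstacle; the only point requiring care is consistent bookkeeping of which index set is being summed over in each case of the definition, and making sure, in \eqref{Per: property 5} and \eqref{Per: property 6}, that the scalar factor pulled outside the sum truly depends only on the row (respectively column) index and not on the permutation $\hat\alpha_M$ (respectively $\hat\beta_N$). Once that is verified, the lemma is established by a few lines per item.
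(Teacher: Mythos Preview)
Your proposal is correct and follows exactly the approach the paper intends: the paper does not spell out a proof of this lemma at all, simply remarking that ``one can easily establish'' these properties from Definition~\ref{def:Permanent}. Your item-by-item verification directly from the definition, with the appropriate case split on $M\le N$ versus $M>N$, is precisely the kind of routine check the authors had in mind.
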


For an $M\times N $ matrix with $M\leq N$, there exists an analogy
between the matrix permanent and the Laplace expansion of the
determinant\cite{Minc,Aitken}. The following lemma gives the straightforward
extension of this result for arbitrary\footnote{The result for the case $M > N$ is obtained by employing (\ref{Per: property 1}), and following the same steps as used in the derivation for the case $M\leq N$, given in \cite{Minc}.} $M$ and $N$.
\begin{lemma} \label{Lemma: Per:Laplace expansion}
Let $\bf A $ be an $M \times N$ matrix. Then
\begin{equation}\label{Per:Laplace expansion}
 \Per({\bf A})=\left\{\begin{array}{cc}
\sum\limits_{{\hat{\sigma}_{k}}\in \mathcal {S}_{N}^{(k)}}{ \Per
\left({\bf A}^{\hat{\alpha}_{k}}_{\hat{\sigma}_{k}}\right)}
{\Per \left({\bf A}^{\hat{\alpha}^{'}_{k}}_{\hat{\sigma}^{'}_{k}}\right)}, &  M \leq N \\
\sum\limits_{{\hat{\sigma }}_{k}\in \mathcal{S}_{M}^{(k)}}{ \Per
\left( {\bf A}^{\hat{\sigma}_{k}}_{\hat{\beta}_{k}}\right)}{ \Per
\left( {\bf A}^{\hat{\sigma}^{'}_{k}}_{\hat{\beta}^{'}_{k}}\right)},
& M > N,
\end{array} \right.
\end{equation}
where ${\hat{\alpha }_{k}}\in \mathcal {S}_{M}^{(k)}$ and
${\hat{\beta}_{k}}\in \mathcal {S}_{N}^{(k)}$ with $1 \leq k \leq
{\rm min}(M, N)$. Note that for the case $k={\rm min}(M, N)$, $\Per
\left({\bf A}^{\hat{\alpha}^{'}_{k}}_{\hat{\sigma}^{'}_{k}}\right)=1
$ and $\Per \left( {\bf
A}^{\hat{\sigma}^{'}_{k}}_{\hat{\beta}^{'}_{k}}\right)=1$.
\end{lemma}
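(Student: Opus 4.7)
My plan is to prove the $M \leq N$ case directly from Definition \ref{def:Permanent} via a combinatorial re-indexing of the permanent sum, and then to obtain the $M > N$ case as an immediate consequence by transposition, using property (\ref{Per: property 1}) of Lemma \ref{Per:Basic properties }, as indicated in the footnote. Throughout I fix a row subset $\hat{\alpha}_k \in \mathcal{S}_M^{(k)}$ and work with the first branch of Definition \ref{def:Permanent}.

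First I would regard each $\hat{\alpha}_M = (\alpha_1,\ldots,\alpha_M) \in \mathcal{S}_N^M$ appearing in the defining sum of $\Per(\bf A)$ as an injective map $\pi\colon \{1,\ldots,M\} \to \{1,\ldots,N\}$ with $\pi(i) = \alpha_i$. The key observation is that each such $\pi$ is uniquely specified by three pieces of data: (i) the ordered size-$k$ subset $\hat{\sigma}_k \in \mathcal{S}_N^{(k)}$ obtained by sorting the image $\pi(\hat{\alpha}_k)$; (ii) a bijection from the rows in $\hat{\alpha}_k$ onto the columns in $\hat{\sigma}_k$; and (iii) an injection from the rows in $\hat{\alpha}'_k$ into the $N-k$ remaining columns, i.e.\ the columns indexed by $\hat{\sigma}'_k$.

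Second, I would factor each monomial in the defining permanent sum according to this decomposition,
$$\prod_{i=1}^{M} a_{i,\pi(i)} \;=\; \Bigl(\prod_{r \in \hat{\alpha}_k} a_{r,\pi(r)}\Bigr) \cdot \Bigl(\prod_{r \in \hat{\alpha}'_k} a_{r,\pi(r)}\Bigr),$$
and group the terms of $\Per(\bf A)$ by the value of $\hat{\sigma}_k$ in (i). For each fixed $\hat{\sigma}_k$, summing the first factor over all bijections in (ii) yields exactly $\Per \bigl({\bf A}^{\hat{\alpha}_k}_{\hat{\sigma}_k}\bigr)$, a square $k \times k$ permanent, while summing the second factor over all injections in (iii) yields $\Per \bigl({\bf A}^{\hat{\alpha}'_k}_{\hat{\sigma}'_k}\bigr)$, an $(M{-}k) \times (N{-}k)$ permanent taken under the first branch of Definition \ref{def:Permanent} (valid since $M-k \leq N-k$). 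Summing these products over all $\hat{\sigma}_k \in \mathcal{S}_N^{(k)}$ then reconstructs $\Per(\bf A)$, giving the claimed expansion. The boundary case $k = \min(M,N) = M$ is consistent because the second sub-permanent is then evaluated on an empty $0 \times (N-M)$ sub-matrix and equals $1$ by convention, as stated in the lemma.

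Finally, for the case $M > N$, I would apply the identity just proved to the $N \times M$ matrix ${\bf A}^T$, choose the role of the fixed row subset to be played by $\hat{\beta}_k \in \mathcal{S}_N^{(k)}$ (viewed as rows of ${\bf A}^T$), and then invoke (\ref{Per: property 1}) together with the fact that row/column deletion commutes with transposition, so that $\Per\bigl(({\bf A}^T)^{\hat{\beta}_k}_{\hat{\sigma}_k}\bigr) = \Per\bigl({\bf A}^{\hat{\sigma}_k}_{\hat{\beta}_k}\bigr)$ and similarly for the complementary sub-matrices. I expect the main obstacle to be only notational: establishing the bijection in the second step cleanly and verifying that the two resulting sub-sums are precisely the two sub-permanents, without mis-identifying the complementary index sets. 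Once this bookkeeping is set up carefully, the argument reduces to regrouping the terms in the definition of $\Per(\bf A)$.
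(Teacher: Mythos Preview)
Your proposal is correct and follows essentially the same route the paper indicates: the paper does not give a self-contained proof but refers to the standard Laplace expansion for permanents in \cite{Minc} for the case $M \leq N$, and explicitly notes (in the footnote) that the case $M > N$ is obtained by applying the transposition property (\ref{Per: property 1}) and repeating the argument. Your combinatorial re-indexing of the defining sum --- decomposing each injection $\pi$ into the image set $\hat{\sigma}_k$ of the fixed rows, a bijection onto $\hat{\sigma}_k$, and an injection of the complementary rows into $\hat{\sigma}'_k$ --- is exactly the classical proof in \cite{Minc}, and your reduction of the $M>N$ case by transposition matches the paper's footnote verbatim.
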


For the special case $k=1$, (\ref{Per:Laplace expansion}) can be
re-expressed as follows
\begin{equation}\label{Per:Laplace expansion k=1}
 \Per({\bf A})=\left\{\begin{array}{cc}
\sum\limits_{j=1}^{N} a_{i, j}{ \Per \left({\bf A}_{i,j}\right)}
, &  M \leq N \\
\sum\limits_{j=1}^{M} a_{j, i}{ \Per \left({\bf A}_{j,i}\right)}, &
 M > N,
\end{array} \right.
\end{equation}
where $1 \leq i \leq {\rm min}(M, N)$. This is analogous to the
cofactor expansion of the determinant\cite{Aitken}. With $k={\rm
min}(M, N)$, (\ref{Per:Laplace expansion}) simplifies to
\begin{equation}\label{Per: property 7}
 \Per({\bf A})=\left\{\begin{array}{cc}
\sum\limits_{{\hat{\sigma}_{M}}\in
\mathcal {S}_{N}^{(M)}}{ \Per \left({\bf A}_{\hat{\sigma}_{N}}\right)}, &  M \leq N \\
\sum\limits_{{\hat{\sigma}}_{N}\in \mathcal{S}_{M}^{(N)}}{ \Per
\left( {\bf A}^{\hat{\sigma}_{N}}\right)}, &
 M > N.
\end{array} \right.
\end{equation}

The following two key lemmas are particularly important for deriving
the closed-form capacity upper bound in the sequel.
\begin{lemma} \label{lemma: Per:Polynomial expansion}
Let $\bf A $ be an $M \times
N$ matrix. Then
\begin{align}
\label{Per:EQ Polynomial expansion } \Per (\left[{\bf I}_{M} \, {\bf
A}\right])& = \Per ([{\bf
I}_{N} \, {\bf A}^{T} ]\nonumber \\
& = \sum \limits_{k=0}^{\min(N,M)} \sum\limits_{{\hat{\alpha}}_k\in
\mathcal{S}_M^{(k)}}\Per ({\bf A}^{\hat\alpha_k })\nonumber \\
& = \sum\limits_{k=0}^{\min(N,M)} \sum\limits_{{\hat{\beta}}_k\in
\mathcal{S}_N^{(k)}}\Per({\bf A}_{\hat\beta_k }),
\end{align}
where $\Per ({\bf A}^{\hat{\alpha}_k} )=1$ and $\Per ({\bf
A}_{{\hat{\beta}_k}} )=1$ when $k=0$.
\end{lemma}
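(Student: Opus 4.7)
The plan is to compute $\Per\bigl([{\bf I}_M\ {\bf A}]\bigr)$ directly from Definition \ref{def:Permanent} and then recognize the resulting sum in each of the two claimed forms via Lemma \ref{Lemma: Per:Laplace expansion}. View $B=[{\bf I}_M\ {\bf A}]$ as an $M\times(M+N)$ matrix so that the first branch of (\ref{definition_permanents}) applies: $\Per(B)=\sum_{\hat\sigma_M\in\mathcal{S}_{M+N}^M}\prod_{i=1}^{M}B_{i,\sigma_i}$. Because the identity block has $[{\bf I}_M]_{i,j}=\delta_{ij}$, a term in this sum is nonzero only if, for every index $i$ with $\sigma_i\le M$, one has $\sigma_i=i$.

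The main step is to partition the surviving permutations according to the set $\hat\alpha_k=\{i:\sigma_i>M\}$ of rows that draw their entry from the $\bf A$ block, with $k=|\hat\alpha_k|$ ranging over $0,1,\ldots,\min(M,N)$. For a fixed $\hat\alpha_k\in\mathcal{S}_M^{(k)}$, the $M-k$ rows outside $\hat\alpha_k$ contribute a factor of $1$ each (each such row $i$ is forced to pick column $i$ of ${\bf I}_M$), while the remaining factors $(\sigma_i)_{i\in\hat\alpha_k}$ range, after the translation $\tau_i=\sigma_i-M$, over all sequences of $k$ distinct elements of $\{1,\ldots,N\}$. By the first line of (\ref{definition_permanents}) applied to the $k\times N$ row-selected submatrix, this inner sum is precisely $\Per({\bf A}^{\hat\alpha_k})$. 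Collecting contributions yields the middle expression in (\ref{Per:EQ Polynomial expansion }), with the $k=0$ boundary accounting for the stated convention $\Per({\bf A}^{\hat\alpha_0})=1$.

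To obtain the third (column-subset) form, I apply the Laplace expansion (\ref{Per:Laplace expansion}) at its largest level to the rectangular submatrix ${\bf A}^{\hat\alpha_k}$: it gives $\Per({\bf A}^{\hat\alpha_k})=\sum_{\hat\beta_k\in\mathcal{S}_N^{(k)}}\Per({\bf A}^{\hat\alpha_k}_{\hat\beta_k})$. Swapping the order of summation over $\hat\alpha_k$ and $\hat\beta_k$ and then recognizing, again by Lemma \ref{Lemma: Per:Laplace expansion} applied to the $M\times k$ column-selected matrix ${\bf A}_{\hat\beta_k}$, that $\sum_{\hat\alpha_k}\Per({\bf A}^{\hat\alpha_k}_{\hat\beta_k})=\Per({\bf A}_{\hat\beta_k})$ converts the double sum into the desired column form. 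Finally, the equality $\Per([{\bf I}_M\ {\bf A}])=\Per([{\bf I}_N\ {\bf A}^T])$ follows by rerunning the entire argument on $[{\bf I}_N\ {\bf A}^T]$ (with $M$ and $N$ swapped and with ${\bf A}^T$ in place of $\bf A$) and invoking (\ref{Per: property 1}) to identify $\Per(({\bf A}^T)^{\hat\alpha_k})=\Per({\bf A}_{\hat\alpha_k})$.

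I do not anticipate a conceptual obstacle: the only point requiring care is the bookkeeping in the partition step, specifically that once the subset $\hat\alpha_k$ is fixed, the rows outside $\hat\alpha_k$ are rigidly matched to the identity diagonal and consume the identity columns indexed by $\hat\alpha_k'$, so the columns available to the rows in $\hat\alpha_k$ really do range freely over all $k$-tuples of distinct indices in $\{M+1,\ldots,M+N\}$ without interference from the identity block. Once this is made explicit, the remainder is a straightforward application of the lemmas already established.
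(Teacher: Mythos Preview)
Your proposal is correct and follows essentially the same approach as the paper's proof. Both arguments expand $\Per([{\bf I}_M\ {\bf A}])$ from the definition, partition the nonzero contributions according to which rows draw their entry from the ${\bf A}$-block (versus being forced onto the identity diagonal), and then use the special case (\ref{Per: property 7}) of Lemma~\ref{Lemma: Per:Laplace expansion} to pass between the row-subset and column-subset forms; the only cosmetic difference is that the paper formalizes the partition via the product identity $\prod_{m}(x_m+y_m)=\sum_{k}\sum_{\hat\alpha_k}\prod_m x_{\alpha_m}\prod_m y_{\alpha'_m}$ and lands first on the double-indexed sum $\sum_k\sum_{\hat\alpha_k}\sum_{\hat\beta_k}\Per({\bf A}^{\hat\alpha_k}_{\hat\beta_k})$, whereas you reach $\sum_k\sum_{\hat\alpha_k}\Per({\bf A}^{\hat\alpha_k})$ directly.
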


A proof is provided in Appendix \ref{Appendix: Proof of lemma:
Per:Polynomial expansion}. The values of $\Per ([{\bf I}_{M} \, {\bf
A}])$ and $\Per ([{\bf I}_{N} \,{\bf A}^{T} ])$ in Lemma \ref{lemma:
Per:Polynomial expansion} will be called {\it extended permanents}
of $\bf A$, which we denote as
\begin{align} \label{eq:extendedPerms}
\underline{\Per}({\bf A})=\Per ([{\bf I}_{M} \, {\bf A}])=\Per
([{\bf I}_{N} \, {\bf A}^{T} ]) \; .
\end{align}

\begin{lemma} \label{Lemma: E_det_prod}
For an $N \times N$ random matrix ${\bf X}$ with independent
elements, suppose that there exists at most one non-zero element in
each row of $\bar{\bf X}=\mathbb{E}\left\{ \bf X \right\}$. Then we
have
\begin{equation}
\mathbb{E}\left\{ \det\left( \bf X \right) \det\left( {\bf X}^{H}
\right) \right\}=\Per \left({\bf \Xi} \right),
\end{equation}
where ${\bf \Xi}=\mathbb{E}\left\{ \bf{X \odot X}^{*}\right\}$.
\end{lemma}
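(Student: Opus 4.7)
The plan is to expand both determinants via the Leibniz formula, push the expectation inside using independence, and then exploit the row-sparsity of $\bar{\bf X}$ to kill all off-diagonal (in the permutation sense) cross terms, leaving precisely the permanent.

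First I would write
\begin{equation*}
\det({\bf X})\det({\bf X}^H) \;=\; \sum_{\sigma \in S_N} \sum_{\tau \in S_N} \operatorname{sgn}(\sigma)\operatorname{sgn}(\tau) \prod_{i=1}^{N} x_{i,\sigma(i)}\, x^{*}_{i,\tau(i)},
\end{equation*}
and then, invoking independence of all $N^{2}$ entries of ${\bf X}$, factorize
\begin{equation*}
\mathbb{E}\!\left\{ \prod_{i=1}^{N} x_{i,\sigma(i)}\, x^{*}_{i,\tau(i)} \right\} \;=\; \prod_{i=1}^{N} \mathbb{E}\!\left\{ x_{i,\sigma(i)}\, x^{*}_{i,\tau(i)} \right\}.
\end{equation*}

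Next I would split the analysis by whether $\sigma(i)=\tau(i)$. When $\sigma(i)=\tau(i)$, the factor reads $\mathbb{E}\{|x_{i,\sigma(i)}|^{2}\}=[{\bf \Xi}]_{i,\sigma(i)}$. When $\sigma(i)\neq\tau(i)$, the two entries $x_{i,\sigma(i)}$ and $x_{i,\tau(i)}$ are distinct and independent, so the factor reduces to $\bar{x}_{i,\sigma(i)}\,\bar{x}^{*}_{i,\tau(i)}$. Here the row-sparsity hypothesis kicks in: since row $i$ of $\bar{\bf X}$ has at most one nonzero entry and $\sigma(i)\neq\tau(i)$ picks out two distinct columns, at least one of $\bar{x}_{i,\sigma(i)}$, $\bar{x}_{i,\tau(i)}$ must vanish, so the factor is zero.

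Consequently, whenever $\sigma\neq\tau$ there is at least one index $i$ with $\sigma(i)\neq\tau(i)$, which forces the entire product over $i$ to vanish. Only the diagonal terms $\sigma=\tau$ survive, and since $\operatorname{sgn}(\sigma)^{2}=1$, I obtain
\begin{equation*}
\mathbb{E}\{\det({\bf X})\det({\bf X}^H)\} \;=\; \sum_{\sigma \in S_N} \prod_{i=1}^{N} [{\bf \Xi}]_{i,\sigma(i)} \;=\; \Per({\bf \Xi}),
\end{equation*}
where the last equality is Definition~\ref{def:Permanent} applied to the square matrix ${\bf \Xi}$.

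The only subtle point—and what I anticipate as the main obstacle to present cleanly rather than to establish—is the vanishing argument for $\sigma\neq\tau$: one must emphasize that even though a single index $i$ with $\sigma(i)\neq\tau(i)$ might individually yield a nonzero expectation in the absence of the sparsity assumption, the hypothesis on $\bar{\bf X}$ prevents \emph{both} $\bar{x}_{i,\sigma(i)}$ and $\bar{x}_{i,\tau(i)}$ from being nonzero simultaneously, which is exactly what forces the factor, and hence the product, to zero. Everything else is a routine Leibniz/permanent bookkeeping.
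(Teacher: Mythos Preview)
Your proposal is correct and follows essentially the same route as the paper: expand both determinants via Leibniz, factor the expectation over rows by independence, and use the row-sparsity of $\bar{\bf X}$ to show that $\mathbb{E}\{x_{i,\sigma(i)}x^{*}_{i,\tau(i)}\}=[{\bf \Xi}]_{i,\sigma(i)}\,\delta(\sigma(i)-\tau(i))$, leaving only the $\sigma=\tau$ terms and hence $\Per({\bf \Xi})$. The paper merely compresses your case split into a single Kronecker-delta identity, but the argument is otherwise identical.
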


A proof is provided in Appendix \ref{Appendix: Proof of Lemma:
E_det_prod}. For the special case where all elements of $\bf X$ are
independent and identically distributed with zero mean and unit
variance, we have that ${\bf \Xi}=\mathbb{E}\left\{ \bf{X \odot
X}^{*}\right\}={\bf 1}_{N \times N}$ and  $\mathbb{E}\left\{
\det\left( \bf X \right) \det\left( {\bf X}^{H} \right)
\right\}=\Per \left({\bf 1}_{N \times N} \right)=N!$. This agrees
with prior results in
\cite{Grant,Kiessling_vtc,Kiessling_diss,Shin}.

The following conjecture is useful when dealing with the optimal
power allocation problem in Section \ref{sec: Optimal Power
Allocation}.
\begin{conjecture} \label{Lemma:Concavity}
Let $\bf A$ be an $M\times N$ matrix of non-negative elements. Then
$f(\boldsymbol \lambda)=\log \Per({\bf A}\diag(\boldsymbol
\lambda))$ is concave on $\mathcal{D}^{N}=\{\boldsymbol \lambda \, |
\, \Per({\bf A}\diag(\boldsymbol \lambda))>0, \; {\rm and} \;
\lambda_{i} \geq 0, \, 1 \leq i \leq N \}$.
\end{conjecture}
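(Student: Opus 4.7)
Plan: I would split on whether $M \geq N$ (which factors trivially) or $M < N$ (the substantive case, which reduces to an Alexandrov--Fenchel-type inequality for mixed permanents). For $M \geq N$, property~\eqref{Per: property 6} of Lemma~\ref{Per:Basic properties } yields
\[
p(\boldsymbol\lambda) := \Per\bigl(\mathbf{A}\diag(\boldsymbol\lambda)\bigr) = \det\bigl(\diag(\boldsymbol\lambda)\bigr)\,\Per(\mathbf{A}) = \Per(\mathbf{A})\prod_{i=1}^{N}\lambda_i,
\]
so $f(\boldsymbol\lambda) = \log \Per(\mathbf{A}) + \sum_{i=1}^{N} \log \lambda_i$ is manifestly concave on $\mathcal{D}^N$.

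For $M < N$, combining Lemma~\ref{Lemma: Per:Laplace expansion} (Laplace expansion with $k=M$, selecting all rows) with property~\eqref{Per: property 6} applied to each $M \times M$ sub-block gives the multi-affine representation
\[
p(\boldsymbol\lambda) = \sum_{\hat{\sigma}_M \in \mathcal{S}_N^{(M)}} \Per\bigl(\mathbf{A}_{\hat{\sigma}_M}\bigr) \prod_{j \in \hat{\sigma}_M} \lambda_j,
\]
a homogeneous polynomial of degree $M$ in $\boldsymbol\lambda$ with non-negative coefficients (since $\mathbf{A}$ is entrywise non-negative). Coordinate-wise monotonicity of $p$ on $\mathbb{R}_{\geq 0}^{N}$ makes $\mathcal{D}^N$ convex, and concavity of $f = \log p$ is equivalent to $\nabla^2 f \preceq 0$, i.e.\ to the Hessian inequality
\[
p(\boldsymbol\lambda)\,\bigl(v^{T} \nabla^2 p(\boldsymbol\lambda)\, v\bigr) \leq \bigl(v^{T} \nabla p(\boldsymbol\lambda)\bigr)^2 \quad \forall\, v \in \mathbb{R}^N, \qquad (\star)
\]
where the multi-affinity of $p$ forces $[\nabla^2 p]_{ii}=0$ and the off-diagonal entries to be themselves mixed permanents of smaller submatrices. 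This self-similar structure invites an induction on $M$: the base $M=1$ is immediate (logarithm of a non-negative linear form), and the inductive step would attempt to derive $(\star)$ at degree $M$ from the analogous inequality for each $\partial_i p$ at degree $M-1$.

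The main obstacle is that $(\star)$ is a \emph{reverse} Cauchy--Schwarz: it asserts that the symmetric bilinear form induced by $\nabla^2 p(\boldsymbol\lambda)$ has Lorentzian signature (at most one positive eigenvalue on the subspace orthogonal to $\nabla p$) rather than being positive semi-definite. This is exactly an Alexandrov--Fenchel inequality for mixed permanents of non-negative matrices---a genuinely deep combinatorial statement. A naive induction on $M$ does not close, since log-concavity of the degree-$(M-1)$ partials $\partial_i p$ is strictly weaker than the Lorentzian structural property needed at degree $M$. The cleanest modern route is via the theory of completely log-concave / Lorentzian polynomials (Gurvits; Br\"and\'en--Huh), which establishes that $\Per(\mathbf{A}\diag(\boldsymbol\lambda))$ is a Lorentzian polynomial whenever $\mathbf{A}$ has non-negative entries, so that concavity of $\log p$ on $\mathcal{D}^N$ follows as a standard corollary. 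The authors' decision to state the result as a conjecture reflects the fact that no short self-contained proof appears to follow from the elementary permanent identities developed earlier in the paper.
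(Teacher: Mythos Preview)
Your treatment of the case $M\geq N$ is identical to the paper's Case~1. For $M<N$, however, the paper does \emph{not} attempt a general argument: Appendix~\ref{Appendix: Proof of Lemma:Concavity} only verifies the cases $M=1$, $M=2$, and $\mathbf{A}$ of rank one, by bare-hands inequalities (for $M=2$, a direct algebraic manipulation shows $g(\boldsymbol\lambda_1,\boldsymbol\lambda_2)\geq 0$; for rank one, the permanent reduces to an elementary symmetric function, whose log-concavity is classical). The authors explicitly leave the general statement open, which is why it is labelled a Conjecture.

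Your route is genuinely different and strictly stronger: it supplies a complete proof. The chain you outline is correct. The product $\prod_{i=1}^{M}\sum_{j=1}^{N}a_{ij}\lambda_j$ is real stable (a product of linear forms with non-negative coefficients), and its multi-affine part is exactly $\Per(\mathbf{A}\diag(\boldsymbol\lambda))=\sum_{|S|=M}\Per(\mathbf{A}_S)\prod_{j\in S}\lambda_j$; the multi-affine-part operator preserves real stability (Borcea--Br\"and\'en), so $p(\boldsymbol\lambda)$ is homogeneous, real stable, with non-negative coefficients, hence Lorentzian, hence log-concave on the open positive orthant (Br\"and\'en--Huh; or already via G{\aa}rding's theory of hyperbolic polynomials). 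Continuity extends the inequality to the boundary part of $\mathcal{D}^N$. The convexity of $\mathcal{D}^N$ that you assert also follows: if $p(\boldsymbol\lambda^{(1)})>0$ then some monomial $\Per(\mathbf{A}_S)\prod_{j\in S}\lambda_j^{(1)}$ is positive, and every coordinate appearing in that monomial remains positive along the segment to any $\boldsymbol\lambda^{(2)}\in\mathcal{D}^N$.

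In short: the paper's approach is elementary and self-contained but only yields fragments; your approach imports substantial external machinery (real stability preservers, Lorentzian polynomials) but settles the conjecture in full. It is worth noting that the Br\"and\'en--Huh framework postdates this paper by roughly a decade, and even the Borcea--Br\"and\'en stability-preserver results were essentially contemporaneous with it, which explains why the authors stopped at special cases.
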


For the general case with arbitrary $M$ and $N$, the formal proof of
this result is not available at this stage. In Appendix
\ref{Appendix: Proof of Lemma:Concavity}, we provide proofs for
several special cases, which lend support to the validity of this
conjecture.

\subsection{Capacity Upper Bound}
Armed with the general results of the preceding subsection, we can
now derive a closed-form expression for the upper bound on the
ergodic capacity.
\begin{theorem}\label{upper_bound}
The ergodic capacity in (\ref{capacity_optimized}) is upper bounded
by
\begin{equation}\label{upper bound_C_u(Lambda)}
C\leq C_{u}=\mathop {\max }\limits_{{\boldsymbol \lambda} \geq {\bf
0}, {\bf 1}^T {\boldsymbol \lambda} = N_t} \log {\Tilde
C}_{u}({\boldsymbol \lambda}),
\end{equation}
where
\begin{equation}\label{C_u(Lambda)}
{\Tilde C}_{u}({\boldsymbol \lambda})=\log
 \underline{\Per} \left( \gamma {\bf
\Omega} \, \diag ({\boldsymbol \lambda}) \right).
\end{equation}
\end{theorem}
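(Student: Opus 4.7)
In view of the Jensen inequality (\ref{mutual information upbound}), the core of Theorem \ref{upper_bound} is the identity
\[
\mathbb{E}\bigl\{\det\bigl(\mathbf{I}_{N_r}+\gamma\tilde{\mathbf{H}}\diag(\boldsymbol\lambda)\tilde{\mathbf{H}}^{H}\bigr)\bigr\}=\underline{\Per}\bigl(\gamma\boldsymbol\Omega\diag(\boldsymbol\lambda)\bigr),
\]
from which (\ref{upper bound_C_u(Lambda)}) and (\ref{C_u(Lambda)}) follow at once by taking logarithms. The plan is to expand the deterministic determinant into a sum of squared subdeterminants of $\tilde{\mathbf{H}}$, apply Lemma \ref{Lemma: E_det_prod} term-wise to obtain a closed-form triple sum involving permanents of submatrices of $\boldsymbol\Omega$, and then re-assemble that sum as the extended permanent using Lemmas \ref{Lemma: Per:Laplace expansion} and \ref{lemma: Per:Polynomial expansion}.

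\emph{Expansion and expectation.} Expanding $\det(\mathbf{I}_{N_r}+\mathbf{B})$ as the sum of all principal minors of $\mathbf{B}=\gamma\tilde{\mathbf{H}}\diag(\boldsymbol\lambda)\tilde{\mathbf{H}}^{H}$ and then applying the Cauchy--Binet formula to each $k\times k$ principal minor, one arrives at
\[
\det(\mathbf{I}_{N_r}+\mathbf{B})=\sum_{k=0}^{\min(N_r,N_t)}\gamma^{k}\sum_{\hat\alpha_k\in\mathcal{S}_{N_r}^{(k)}}\sum_{\hat\beta_k\in\mathcal{S}_{N_t}^{(k)}}\Bigl(\prod_{j\in\hat\beta_k}\lambda_j\Bigr)\bigl|\det(\tilde{\mathbf{H}}^{\hat\alpha_k}_{\hat\beta_k})\bigr|^{2}.
\]
Every submatrix $\tilde{\mathbf{H}}^{\hat\alpha_k}_{\hat\beta_k}$ is $k\times k$ with independent entries, and its mean $\mathbf{D}^{\hat\alpha_k}_{\hat\beta_k}$ inherits from $\mathbf{D}$ the property of having at most one nonzero entry per row. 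Lemma \ref{Lemma: E_det_prod} therefore gives $\mathbb{E}\{|\det(\tilde{\mathbf{H}}^{\hat\alpha_k}_{\hat\beta_k})|^{2}\}=\Per(\boldsymbol\Omega^{\hat\alpha_k}_{\hat\beta_k})$, and exchanging the (finite) sum with the expectation turns the right-hand side into a closed-form triple sum.

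\emph{Re-assembly.} Using (\ref{Per:EQ Polynomial expansion }), $\underline{\Per}(\gamma\boldsymbol\Omega\diag(\boldsymbol\lambda))=\sum_{k}\sum_{\hat\alpha_k\in\mathcal{S}_{N_r}^{(k)}}\Per\bigl((\gamma\boldsymbol\Omega\diag(\boldsymbol\lambda))^{\hat\alpha_k}\bigr)$. Applying the Laplace-type identity (\ref{Per: property 7}) to each $k\times N_t$ submatrix appearing in the outer sum breaks it into a sum over $\hat\beta_k\in\mathcal{S}_{N_t}^{(k)}$ of the $k\times k$ permanents $\Per\bigl((\gamma\boldsymbol\Omega\diag(\boldsymbol\lambda))^{\hat\alpha_k}_{\hat\beta_k}\bigr)$, and properties (\ref{Per: property 4}) and (\ref{Per: property 6}) of Lemma \ref{Per:Basic properties } factor these as $\gamma^{k}\prod_{j\in\hat\beta_k}\lambda_j\cdot\Per(\boldsymbol\Omega^{\hat\alpha_k}_{\hat\beta_k})$. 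The resulting triple sum matches the expanded expectation term for term, establishing the identity.

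The decisive step is Lemma \ref{Lemma: E_det_prod}: it is precisely what allows the expectation of $|\det(\cdot)|^{2}$ to collapse into a single ordinary permanent, and it is exactly where the row/column-sparsity hypothesis on $\mathbf{D}$ is consumed. The remaining work is straightforward but meticulous index-tracking, in particular making sure that the row-indexed version of $\underline{\Per}$ in Lemma \ref{lemma: Per:Polynomial expansion} is lined up with the column-indexed contraction produced by Cauchy--Binet, which is guaranteed by the second equality in (\ref{Per:EQ Polynomial expansion }).
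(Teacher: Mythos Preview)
Your proposal is correct and follows essentially the same route as the paper: expand $\det(\mathbf{I}_{N_r}+\gamma\tilde{\mathbf{H}}\diag(\boldsymbol\lambda)\tilde{\mathbf{H}}^{H})$ via principal minors and Cauchy--Binet, apply Lemma~\ref{Lemma: E_det_prod} to each $\mathbb{E}\{|\det(\tilde{\mathbf{H}}^{\hat\alpha_k}_{\hat\beta_k})|^{2}\}$, and then collapse the resulting triple sum into $\underline{\Per}(\gamma\boldsymbol\Omega\diag(\boldsymbol\lambda))$ using (\ref{Per: property 6}), (\ref{Per: property 7}) and Lemma~\ref{lemma: Per:Polynomial expansion}. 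Your identification of Lemma~\ref{Lemma: E_det_prod} as the decisive step, and of the sparsity hypothesis on $\mathbf{D}$ as being consumed precisely there, matches the paper exactly.
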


\begin{proof}
We start by writing the upper bound for the expected mutual
information under SET in (\ref{mutual information upbound}) as
\begin{equation}\label{upcapacity}
{\tilde C}_{u}(\boldsymbol \lambda)= \log E(\boldsymbol \lambda)
\end{equation}
where
\begin{equation}\label{upcapacity_lambda}
E(\boldsymbol \lambda) = \mathbb{E} \left\{ {\det \left(
{\bf{I}}_{N_r} + \gamma  {\bf{\tilde H}}\,\diag( {\boldsymbol
\lambda }){\bf{\tilde H}}^{H} \right)} \right\} \; .
\end{equation}
By using the characteristic polynomial expansion of the determinant,
as well as the Cauchy-Binet formula for the determinant of a product
matrix\cite{Aitken}, we have
\begin{align}
\label{determinant expansion1} E({\boldsymbol \lambda })
&=\mathbb{E} \left\{\sum\limits_{k = 0}^{\min(N_t, N_r)} \gamma ^{k}
{\sum\limits_{{\hat \alpha _k}\in \mathcal {S}_{N_r}^{(k)} } {\det
\left(\left( {\bf{\tilde H}}\, \diag( {\boldsymbol \lambda
}){\bf{\tilde H}}^{H} \right)_{\hat
\alpha _k }^{\hat \alpha _k }\right) } }\right\} \nonumber \\
 &= \mathbb{E} \left\{ \sum\limits_{k = 0}^{\min(N_t, N_r)} \gamma
^{k} { \sum\limits_{\hat \alpha _k \in \mathcal {S}_{N_r}^{(k)}}\;
\sum\limits_{\hat \beta _k \in \mathcal {S}_{N_t}^{(k)}}\;
\sum\limits_{\hat \xi _k \in \mathcal {S}_{N_t}^{(k)} }{\det \left(
{\bf{\tilde H}} _{\hat \beta _k }^{\hat \alpha _k }\right) } } {
{\det \left( \diag( {\boldsymbol \lambda }) _{\hat \xi _k }^{\hat
\beta _k }\right) } } {{\det \left( \left({\bf{\tilde H}}^{H}\right)
_{\hat \alpha _k }^{\hat \xi _k
}\right) } }\right\} \nonumber \\
 &=  \sum\limits_{k = 0}^{\min(N_t, N_r)} \gamma ^{k}
\sum\limits_{\hat \alpha _k \in \mathcal {S}_{N_r}^{(k)}}\;
\sum\limits_{\hat \beta _k \in \mathcal {S}_{N_t}^{(k)}} {\det
\left( \diag( {\boldsymbol \lambda }) _{\hat \beta _k }^{\hat \beta
_k }\right) } \mathbb{E} \left\{ {\det \left(  {\bf{\tilde H}}_{\hat
\beta _k }^{\hat \alpha _k } \right) } {\det \left(
\bigl({\bf{\tilde H}}^{H}\bigr)_{\hat \alpha _k }^{\hat \beta_k }
\right) } \right\} \; .
\end{align}
Let us denote ${\bf X}=\left( {\bf{\tilde H}} \right)_{\hat \beta _k
}^{\hat \alpha _k }$. Then, ${\bf X}^{H}= \left({\bf{\tilde
H}}^{H}\right)_{\hat \alpha _k }^{\hat \beta_k }$, and it is easily
found that $\mathbb{E} \left\{ {\bf X}\odot {\bf X}^{*}\right\}={\bf
\Omega}_{\hat \beta _k }^{\hat \alpha_k }$. The matrix ${\bf X}$
satisfies the conditions in Lemma \ref{Lemma: E_det_prod}. Thus, we
have
\begin{align} \label{determinant expansion_E_det_prod}
\mathbb{E} \left\{ {\det \left( {\bf{\tilde H}}_{\hat \beta _k
}^{\hat \alpha _k } \right) }   {\det \left( \left({\bf{\tilde
H}}^{H} \right)_{\hat \alpha _k }^{\hat \beta_k } \right)}
\right\}&= \Per\left({\bf \Omega}_{\hat \beta _k }^{\hat \alpha_k
}\right).
\end{align}
Substituting (\ref{determinant expansion_E_det_prod}) into
(\ref{determinant expansion1}) and using the properties of the
permanents in Lemma \ref{Per:Basic properties }, as well as
(\ref{Per: property 7}) and Lemma \ref{lemma: Per:Polynomial
expansion}, we find that
\begin{align}
E({\boldsymbol \lambda }) &= \! \sum\limits_{k = 0}^{\min(N_t, N_r)}
\!\!\! \gamma ^{k} \!\!\! \sum\limits_{\hat \alpha _k  \in \mathcal
{S}_{N_r}^{(k)}}  \sum\limits_{\hat \beta _k \in \mathcal
{S}_{N_t}^{(k)}} \!\!\! {\det \left( \diag( {\boldsymbol \lambda })_{\hat
\beta _k }^{\hat \beta _k }  \right)} \Per \left({\bf \Omega}_{\hat
\beta _k }^{\hat \alpha_k }\right)\nonumber \\
&=\!\sum \limits_{k = 0}^{\min(N_t, N_r)} \!\!\! \gamma ^{k} \!\!\! \sum\limits_{\hat
\alpha _k \in \mathcal {S}_{N_r}^{(k)}} \sum\limits_{\hat \beta _k
\in \mathcal {S}_{N_t}^{(k)}} \!\!\! \Per\left( \left({\bf \Omega}\,\diag(
{\boldsymbol \lambda }) \right)_{\hat \beta _k }^{\hat \alpha_k }\right)\nonumber \\
&=\!\sum\limits_{k = 0}^{\min(N_t, N_r)} \!\!\! \gamma ^{k} \!\!\! \sum\limits_{\hat
\alpha _k \in \mathcal {S}_{N_r}^{(k)}} \!\!\! \Per\left(\left({\bf
\Omega}\,\diag( {\boldsymbol \lambda }) \right)^{\hat \alpha_k }\right)\nonumber \\
&=\!\sum\limits_{k = 0}^{\min(N_t, N_r)}  \sum\limits_{\hat \alpha _k
\in \mathcal {S}_{N_r}^{(k)}}  \!\!\! \Per \left(\left(\gamma{\bf
\Omega}\,\diag( {\boldsymbol \lambda }) \right)^{\hat \alpha_k
}\right) \nonumber \\
&= \underline{\Per} \left( \gamma {\bf \Omega}\diag( {\boldsymbol
\lambda }) \right).   \label{determinant expansion2}
\end{align}
Substituting (\ref{determinant expansion2}) into (\ref{upcapacity})
and using (\ref{capacity upbound}) complete the proof.
\end{proof}

From the above theorem, we see that the upper bound on capacity is
completely determined by the average SNR $\rho$ ($=\gamma N_t$) and
the eigenmode channel coupling matrix ${\bf \Omega}$. This bound is
particularly useful, since we may now apply (\ref{C_u(Lambda)}) to
maximize $\tilde{C}_{u}({\boldsymbol \lambda})$ with the respect to
$\boldsymbol \lambda$ (i.e.\ address the power allocation problem),
without the need for performing Monte-Carlo averaging over random
realizations of the MIMO channel matrix.

It is interesting to consider the special case when $\bf D = 0$ and
${\bf M= a b}^{T}$, where $\bf a$ and $\bf b$ are $N_r\times1$ and
$N_t\times 1$ real vectors. In this case, the jointly-correlated
channel model considered in this paper reduces to the popular
Kronecker correlation model. Defining ${\boldsymbol \lambda}_r=\bf
a\odot a$ and ${\boldsymbol \lambda}_t=\bf b\odot b$, the eigenmode
channel coupling matrix can then be expressed as ${\bf
\Omega}={\boldsymbol \lambda}_{r} {\boldsymbol \lambda}_{t}^{T} =
{\diag}\left({\boldsymbol \lambda}_{r}\right) \, {\bf 1}_{N_{r}
\times N_{t}}\,{\diag}\left({\boldsymbol \lambda}_{t}\right) $, and
(\ref{C_u(Lambda)}) reduces to
\begin{align}
{\Tilde C}_{u}({\boldsymbol \lambda }) & = \log \underline{\Per}
\left( \gamma {\bf
\Omega} \,\diag( {\boldsymbol \lambda }) \right) \nonumber \\
& =\log \sum\limits_{k = 0}^{\min(N_t, N_r)} \gamma ^{k}
\sum\limits_{\hat \alpha _k  \in \mathcal {S}_{N_r}^{(k)}}
\;\sum\limits_{\hat \beta _k \in \mathcal {S}_{N_t}^{(k)}} {\det
\left( \diag( {\boldsymbol \lambda })_{\hat \beta _k }^{\hat \beta
_k } \right)} \Per \left(\left({\diag}\left({\boldsymbol
\lambda}_{r}\right) \, {\bf 1}_{N_{r} \times
N_{t}}\,{\diag}\left({\boldsymbol \lambda}_{t}\right) \right)_{\hat
\beta _k }^{\hat
\alpha_k }\right)\nonumber \\
& =\log \sum\limits_{k = 0}^{\min(N_t, N_r)} \gamma ^{k} \, k!
\sum\limits_{\hat \alpha _k  \in \mathcal {S}_{N_r}^{(k)}}
\;\sum\limits_{\hat \beta _k \in \mathcal {S}_{N_t}^{(k)}}{\det
\left( \diag( {\boldsymbol \lambda}_{r})_{\hat \alpha _k }^{\hat
\alpha_k } \right)}\, {\det \left( \diag( {\boldsymbol \lambda
}\odot {\boldsymbol \lambda}_{t})_{\hat \beta _k }^{\hat \beta _k }
\right)}. \label{eq:KroneckerCap}
\end{align}
Equation (\ref{eq:KroneckerCap}) is equivalent to the upper bounds
presented previously for Kronecker-correlated channels in
\cite{Kiessling_vtc,Kiessling_diss}. Moreover, for the special case
${\boldsymbol \lambda}={\bf 1}$ (i.e.\ the case of equal-power
allocation), (\ref{eq:KroneckerCap}) reduces further to the capacity
upper bound presented in \cite{Shin}.

\subsection{Efficient Computation Algorithms}

To evaluate the closed-form capacity upper bound expression given by
({\ref{upper bound_C_u(Lambda)}}) and (\ref{C_u(Lambda)}), we must
evaluate the extended permanent of the matrix $\gamma {\bf
\Omega}\,\diag({\boldsymbol \lambda})$. Clearly, when the size of
the matrix is small, this can be done by simply expressing the
extended permanent as a conventional permanent via
(\ref{eq:extendedPerms}), and then either directly applying
Definition \ref{def:Permanent}, or using the Laplace expansion in
Lemma \ref{Lemma: Per:Laplace expansion}. However, in both cases, as
the size of the matrix grows, the computational complexity increases
significantly, and more efficient methods are needed. To see this,
consider the task of evaluating the permanent of a general $M\times
N$ matrix $\bf A$. The complexity associated with computing matrix
permanents is conventionally measured in terms of the number of the
required multiplications. Adopting this measure, the number of
multiplications required for evaluating the matrix permanent using
Definition \ref{def:Permanent} and the Laplace expansion (e.g. via
(\ref{Per:Laplace expansion k=1})) are $\frac{(m-1)n!}{(n-m)!}$ and
$\sum_{k=1}^{m-1}\frac{n!}{(n-k)!}$, respectively, where $m={\rm
min}(M,N)$ and $n={\rm max}(M,N)$. Clearly, as the matrix dimensions
increase, the computational complexity increases exponentially. For
this reason, it is necessary to investigate more efficient
computational algorithms.

The best-known algorithm for computing the matrix permanent of
arbitrary dimensions is due to Ryser \cite{Ryser}\footnote{For the
case of square matrices, further improvements have also been
proposed \cite{Nijenhuis}.}, who showed that the permanent of the
$M\times N$ matrix $\bf A$ (with $M\leq N$) can be evaluated via
\begin{equation}\label{Per:Ryser}
\Per({\bf A})=\sum \limits_{k=0}^{M} (-1)^{M-k} C_{N-k}^{M-k}\sum
\limits_{\hat{\alpha}_k \in
\mathcal{S}_{N}^{(k)}}\prod\limits_{i=1}^{M}r_i{({\bf
A}_{\alpha_k})},
\end{equation}
where $C_j^{i}=\frac{j!}{i!(j-i)!}$, and $r_i(\cdot)$ represents the
sum of the elements in the $i$-th row of the matrix argument. A
similar formula also exists for the case $M>N$. This algorithm
requires $m+(m-1)\sum_{k=1}^{m}C_{n}^{k}$ multiplications, with $m$
and $n$ defined as above.

In our case, we are interested in computing the extended permanent
$\underline{\Per} ( \hat{\bf \Omega} )$ in (\ref{C_u(Lambda)}),
i.e.\ the permanent of $[{\bf I}_{N_r} \, \hat{\bf \Omega }]$ or
$[{\bf I}_{N_t} \, \hat{\bf \Omega }^T]$, where $\hat{\bf
\Omega}=\gamma {\bf \Omega}\,\diag({\boldsymbol \lambda})$. By
directly computing this quantity based on Definition
\ref{def:Permanent}, the Laplace expansion, or Ryser's method, the
number of required multiplications is
$\frac{(N_{\rm min}-1)(N_{\rm min}+N_{\rm max})!}{N_{\rm max}!}$,
$\sum_{k=1}^{N_{\rm min}-1}\frac{(N_{\rm min}+N_{\rm max})!}{(N_{\rm min}+N_{\rm max}-k)!}$
and
$N_{\rm min}+(N_{\rm min}-1)\sum_{k=1}^{N_{\rm min}}C_{N_{\rm min}+N_{\rm max}}^{k}$,
respectively, where $N_{\rm min}={\rm min}(N_t,$ $ N_r)$ and
$N_{\rm max}={\rm max}(N_t,N_r)$. For practical values of $N_r$ and
$N_t$, these complexities can be quite high. As such, we are
motivated to establish new and more efficient methods for computing
the extended permanent, which we now consider.

Let us define the following auxiliary matrix
\begin{equation}
\hat{\bf \Omega }(z)={\bf 1}_{N_r \times N_t}+z\hat{\bf \Omega}.
\end{equation}
We will establish new efficient computation algorithms for
$\underline{\Per} ( \hat{\bf \Omega} )$ based on the following key
result.
\begin{lemma} \label{Lemma: Per: poly in z}
Let $\Per ({\hat{\bf \Omega }(z)})=\sum _{k=0}^{{\rm min}(N_r,N_t)
}\mu_k z^{k}$. Then
\begin{equation}\label{Per:poly_coeff}
\underline{\Per}( \hat{\bf \Omega })=\sum _{k=0}^{{\rm min}(N_r,
N_t)}\mu_k c_k ,
\end{equation}
where $c_k=|N_t-N_r|!/({\rm max}(N_r, N_t)-k)!$.
\end{lemma}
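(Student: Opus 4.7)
\medskip

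\noindent\textbf{Proof proposal.} The plan is to expand $\Per(\hat{\bf\Omega}(z))$ directly from Definition \ref{def:Permanent}, read off the coefficient $\mu_k$ of $z^k$, and then match it against the row-submatrix expansion of $\underline{\Per}(\hat{\bf\Omega})$ supplied by Lemma \ref{lemma: Per:Polynomial expansion}. The identity (\ref{Per:poly_coeff}) will emerge simply by comparing the two expressions term by term in $k$.

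I assume without loss of generality that $N_r \leq N_t$; the opposite case reduces to this one by transposing, using (\ref{Per: property 1}) and the fact that $c_k$ is symmetric in $N_r$ and $N_t$. Writing $[\hat{\bf\Omega}(z)]_{ij} = 1 + z[\hat{\bf\Omega}]_{ij}$, Definition \ref{def:Permanent} gives
\begin{equation*}
\Per(\hat{\bf\Omega}(z)) \;=\; \sum_{\hat\alpha_{N_r}\in\mathcal{S}_{N_t}^{N_r}} \prod_{i=1}^{N_r}\bigl(1 + z[\hat{\bf\Omega}]_{i,\alpha_i}\bigr).
\end{equation*}
Expanding each binomial factor and collecting powers of $z$, the coefficient of $z^k$ is a sum over subsets of rows of size $k$; indexing these subsets by ordered tuples $\hat\sigma_k\in\mathcal{S}_{N_r}^{(k)}$ yields
\begin{equation*}
\mu_k \;=\; \sum_{\hat\sigma_k\in\mathcal{S}_{N_r}^{(k)}}\; \sum_{\hat\alpha_{N_r}\in\mathcal{S}_{N_t}^{N_r}}\; \prod_{i\in\hat\sigma_k}[\hat{\bf\Omega}]_{i,\alpha_i}.
\end{equation*}

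The key combinatorial step is to evaluate the inner sum for a fixed $\hat\sigma_k$. The entries $(\alpha_i)_{i\in\hat\sigma_k}$ can be any size-$k$ permutation of $\{1,\dots,N_t\}$, and for each such choice the remaining $N_r-k$ entries of $\hat\alpha_{N_r}$ must fill out an injective assignment into the $N_t-k$ unused columns, which can be done in exactly $(N_t-k)!/(N_t-N_r)!$ ways; crucially this multiplicity does not depend on the fixed choice. The constrained inner sum over $(\alpha_i)_{i\in\hat\sigma_k}$ is, by Definition \ref{def:Permanent}, precisely $\Per\bigl(\hat{\bf\Omega}^{\hat\sigma_k}\bigr)$, the permanent of the $k\times N_t$ sub-matrix of $\hat{\bf\Omega}$ obtained by keeping the rows indexed by $\hat\sigma_k$. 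Hence
\begin{equation*}
\mu_k \;=\; \frac{(N_t-k)!}{(N_t-N_r)!}\sum_{\hat\sigma_k\in\mathcal{S}_{N_r}^{(k)}}\Per\bigl(\hat{\bf\Omega}^{\hat\sigma_k}\bigr) \;=\; \frac{1}{c_k}\sum_{\hat\sigma_k\in\mathcal{S}_{N_r}^{(k)}}\Per\bigl(\hat{\bf\Omega}^{\hat\sigma_k}\bigr),
\end{equation*}
where the second equality uses the hypothesis $N_r\leq N_t$ to identify $c_k=(N_t-N_r)!/(N_t-k)!$.

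Finally, Lemma \ref{lemma: Per:Polynomial expansion} provides $\underline{\Per}(\hat{\bf\Omega}) = \sum_{k=0}^{N_r}\sum_{\hat\sigma_k\in\mathcal{S}_{N_r}^{(k)}}\Per(\hat{\bf\Omega}^{\hat\sigma_k})$, so multiplying the displayed formula for $\mu_k$ by $c_k$ and summing over $k$ produces exactly (\ref{Per:poly_coeff}). I do not anticipate any conceptual obstacle; the only mildly delicate step is the counting argument that gives the factor $(N_t-k)!/(N_t-N_r)!$, and the bookkeeping check that this factor matches $1/c_k$ under both orderings of $N_r$ and $N_t$.
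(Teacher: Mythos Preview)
Your proposal is correct and follows essentially the same route as the paper's own proof: expand $\Per(\hat{\bf\Omega}(z))$ from Definition~\ref{def:Permanent}, isolate the coefficient of $z^k$ by choosing the $k$ rows that contribute the $z$-factor, identify the multiplicity $(N_t-k)!/(N_t-N_r)!$ from the free injective assignment of the remaining rows, and then compare with the row-submatrix expansion of $\underline{\Per}(\hat{\bf\Omega})$ in Lemma~\ref{lemma: Per:Polynomial expansion}. The paper likewise treats $N_r\le N_t$ and declares the other case similar; your reduction via (\ref{Per: property 1}) is a slightly cleaner way to say the same thing.
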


A proof is presented in Appendix \ref{Appendix: Proof of Lemma: Per:
poly in z}. This result shows that the extended permanent
$\underline{\Per}( \hat{\bf \Omega })$ can be calculated directly
from the polynomial expansion of  $\Per ({\hat{\bf \Omega }(z)})$.
Considering the case $N_r \leq N_t$, from Definition
\ref{def:Permanent} in (\ref{definition_permanents}), Laplace
expansion (\ref{Per:Laplace expansion k=1}) and Ryser's expression
(\ref{Per:Ryser}), we have the following three formulas for
$\Per({\hat{\bf \Omega }(z)})$:
\begin{equation}
\label{Extended permanents with definition} \Per({\hat{\bf \Omega
}(z)})= \sum\limits_{{\hat{\alpha}_{N_r}}\in \mathcal
{S}_{N_t}^{N_r}}\prod\limits_{i=1}^{N_r} \left(1+{\hat
\omega}_{i,{\alpha}_{i}}z\right),
\end{equation}
\begin{equation}\label{Extended permanents with Laplace expansion k=1}
\Per({\hat{\bf \Omega }(z)})=\sum\limits_{j=1}^{N_t} (1+{\hat
\omega}_{i,j}z){ \Per \left({\hat{\bf \Omega }(z)}_{i,j}\right)},
\end{equation}
\begin{equation}\label{Per:Ryser Poly}
\Per({\hat{\bf \Omega }(z)})=\sum \limits_{k=0}^{N_r} (-1)^{N_r-k}
C_{N_t-k}^{N_r-k}\sum \limits_{\hat{\alpha}_k \in
\mathcal{S}_{N_t}^{(k)}}\prod\limits_{i=1}^{N_r}r_i{({\hat{\bf
\Omega }(z)}_{\alpha_k})}.
\end{equation}
It is convenient to re-express (\ref{Per:Ryser Poly}) by letting
\begin{equation}\label{Prod poly}
\prod\limits_{i=1}^{N_r}(1+r_i{({\hat{\bf \Omega
}}_{\alpha_k})z})=\sum \limits_{i=0}^{N_r}a_{i,\alpha_k}z^{i},
\end{equation}
with $a_{0,\alpha_k}=1$, such that
\begin{equation}\label{Ryser PolyProd}
\prod \limits_{i=1}^{N_r}r_i{({\hat{\bf \Omega
}(z)}_{\alpha_k})}=\sum _{i=0}^{N_r}k^{N_r-i}a_{i,\alpha_k}z^{i} \;
.
\end{equation}
This yields
\begin{equation}\label{Per:Ryser Poly2}
\Per({\hat{\bf \Omega }(z)})=\sum \limits_{k=0}^{N_r}\sum
_{i=0}^{N_r} z^{i}(-1)^{N_r-k} C_{N_r-k}^{N_t-k}k^{N_r-i}\sum
\limits_{\hat{\alpha}_k \in \mathcal{S}_{N_t}^{(k)}}a_{i,\alpha_k}.
\end{equation}
Importantly, we find that each of the equivalent expressions
(\ref{Extended permanents with definition}), (\ref{Extended
permanents with Laplace expansion k=1}) and (\ref{Per:Ryser Poly2})
admit simple and efficient recursive algorithms for calculating the
coefficients of $z$. To demonstrate this, consider (\ref{Extended
permanents with definition}). Let $ {\tilde b}_k(z)=\prod_{i=1}^{k}
\left(1+{\hat \omega}_{i,{\alpha}_{i}}z\right)=1+\sum_{n=1}^k
b_{k,n}z^n$, where $k=1,2,\cdots N_r$. Then, ${\tilde
b}_{k+1}(z)={\tilde b}_k(z)\left(1+{\hat
\omega}_{k+1,{\alpha}_{k+1}}z\right)$ for $1\leq k \leq N_r-1$, and
therefore the coefficients of $z$ can be evaluated recursively via
\begin{equation}\label{Recursive formula for product term}
b_{k+1,n}=\left\{\begin{array}{cc}{\hat
\omega}_{k+1,\alpha_{k+1}}+b_{k,1},& n=1\\{\hat
\omega}_{k+1,\alpha_{k+1}}b_{k,n-1}+b_{k,n},& 2\leq n\leq k\\{\hat
\omega}_{k+1,\alpha_{k+1}}b_{k,k},& n=k+1.\end{array}\right.
\end{equation}
This result, combined with Lemma \ref{Lemma: Per: poly in z},
presents an efficient algorithm for computing the extended permanent
$\underline{\Per}( \hat{\bf \Omega })$. In a similar manner,
efficient computational algorithms can also be easily obtained based
on (\ref{Extended permanents with Laplace expansion k=1}) and
(\ref{Per:Ryser Poly2}). We omit the specific details of these. For
arbitrary $N_t$ and $N_r$, with $N_{\rm min}$ and $N_{\rm max}$ defined as
above, the number of required multiplications for the three
polynomial-based computation algorithms are
$\frac{N_{\rm min}(N_{\rm min}-1)N_{\rm max}!}{2(N_{\rm max}-N_{\rm min})!}$,
$\sum_{k=1}^{N_{\rm min}-1}\frac{(N_{\rm min}-k)N_{\rm max}!}{(N_{\rm max}-k)!}$ and
$N_{\rm min}^2+\frac{N_{\rm min}(N_{\rm min}-1)}{2}\sum_{k=1}^{N_{\rm min}}C_{N_{\rm max}}^{k}$,
respectively.

Fig. \ref{fig:fig0} presents the number of required multiplications
for evaluating ${\Tilde C}_u(\boldsymbol \lambda)$ based on the
three polynomial-based computation algorithms, for various antenna
configurations of the form $N=N_{t}=N_r$. The number of required
multiplications for calculating ${\Tilde C}_u(\boldsymbol \lambda)$
by \emph{directly} using Definition \ref{def:Permanent}, Laplace
expansion and Ryser's formula are also shown for comparison. We
clearly see that the polynomial-based algorithms have significantly
reduced computational complexity compared with the direct methods;
in many cases yielding orders of magnitude improvements. Of the polynomial-based algorithms,
the Laplace expansion gives the least complexity for $N \leq 5$,
whereas the Ryser-based formula is most efficient for $N>5$.

\section{Optimal Power Allocation with the Capacity Bound}
 \label{sec: Optimal Power Allocation}
\subsection{Asymptotic Optimality at Low and High SNR}
\label{sec:OPA_Asymptotic} Based on the tight closed-form capacity
upper bound in Theorem \ref{upper_bound}, we can now address the
transmitter power allocation optimization problem by dealing with
only the eigenmode channel coupling matrix $\bf \Omega $ and the
transmit SNR $\rho$ ($=\gamma N_t$). The optimal solution for
maximizing the upper bound will then serve as an approximation to
the optimal capacity-achieving power allocation solution.  Our
numerical results will confirm the accuracy of this approximation.

The power allocation optimization problem can be formulated as follows
\begin{eqnarray}\label{Optimization: formula objective}
\max_{\boldsymbol \lambda} && {\tilde C}_{u}({\boldsymbol \lambda}) \;\;\;\;\;\;\;\;\;\\
\label{Optimization: formula constraints} \mbox{ subject to} &&
{\boldsymbol \lambda} \geq {\bf 0}, {\bf 1}^T {\boldsymbol \lambda}
= N_t.
\end{eqnarray}
Before dealing with this problem in its most generality, we briefly
check the asymptotic optimality of our approach at low and high SNR.
For arbitrary SNRs, we will then develop optimality conditions and
an iterative numerical algorithm in the framework of convex
optimization.

For low SNRs, ${\Tilde C}_{u}({\boldsymbol \lambda})$ can be expressed as
\begin{align}\label{Optimization: Low SNR}
{\Tilde C}_{u}({\boldsymbol \lambda})&=\log \left(1+ \gamma \sum
\limits_{i=1}^{N_t} \tau_i \lambda_i+O(\gamma^2)\right)\nonumber\\
&= \gamma \sum \limits_{i=1}^{N_t} \tau_i \lambda_i+O(\gamma^2),
\end{align}
where $\tau_i=\sum_{j=1}^{N_r} [{\bf \Omega}]_{ji}$. Without any
loss of generality, assume that
$\tau_1=\tau_2=\ldots=\tau_l>\tau_{l+1}\geq \ldots\geq\tau_{N_t}$.
Maximizing the first-order (in $\gamma$) term  in
(\ref{Optimization: Low SNR}) subject to the constraint
(\ref{Optimization: formula constraints}) gives the following
power-allocation policy
\begin{align} \label{eq:beamforming}
\lambda_i =
\left\{ \begin{array}{ll}
N_t/l , & {\rm for} \; i = 1, \ldots, l \\
0 , & {\rm for} \; i = l+1, \ldots, N_t.
\end{array}
\right.
\end{align}
This means that beamforming along the strongest transmit eigenmodes
(specified by the channel coupling matrix $\mathbf{\Omega}$) is
optimal in the low SNR case.

For high SNRs, with $N_t\leq N_r$, 
we have
\begin{align}\label{Optimization: High SNR}
{\Tilde C}_{u}({\boldsymbol \lambda})&\rightarrow\log \left(\Per
(\gamma \Omega \, \diag(\lambda) ) \right)\nonumber\\
&= \log \Per (\gamma \Omega )  + \log
\det\left(\diag(\lambda)\right)
\end{align}
which is maximized by the following power allocation policy
\begin{equation} \label{eq:EqualPower}
\lambda_i = 1, \; \;\;i = 1, \ldots, N_t
\end{equation}
i.e.\ equal-power allocation over the transmit eigenmodes. These low and high SNR power allocation policies, derived based on the capacity upper bound, coincide exactly with the optimal capacity-achieving power allocation policies for the low and high SNR regimes, considered previously in \cite{Tulino2006,Veeravalli}.

\subsection{Optimality Conditions for Arbitrary SNRs}
We now address the general case with arbitrary SNRs. To this end,
let ${\boldsymbol \lambda}_{1}\geq 0$, ${\boldsymbol
\lambda}_{2}\geq 0$, and $0 \leq \theta \leq 1$.  Then, using
Conjecture \ref{Lemma:Concavity}, we can write
\begin{equation}\label{eq:concave_proof}
\begin{split}
C_{u}(\theta {\boldsymbol \lambda}_1 +(1-\theta){\boldsymbol
\lambda} _2) &= \log \Per \Bigl ( \theta [{\bf I}_{N_r}\,\gamma
{\bf \Omega}\, {\diag ({\boldsymbol \lambda}_1)} ]+(1-\theta) [{\bf
I}_{N_r}\,\gamma {\bf \Omega} \, \diag ({\boldsymbol \lambda}_2) ]
\Bigr ) \\
&= \log \Per \Bigl ( \theta [{\bf
I}_{N_r}\,\gamma {\bf \Omega} ]{\diag (\tilde{{\boldsymbol
\lambda}}_1)}+(1-\theta)
[{\bf I}_{N_r}\,\gamma {\bf \Omega} ]\diag (\tilde{{\boldsymbol \lambda}}_2)  \Bigr )\\
& \geq  \theta \log \Per\left( [{\bf I}_{N_r}\,\gamma {\bf \Omega}
 ]\diag (\tilde{{\boldsymbol \lambda}}_1)\right) + (1-\theta)\log \Per \left(
[{\bf I}_{N_r}\,\gamma {\bf \Omega}   ]\diag (\tilde{{\boldsymbol \lambda}}_2)\right) \\
& =  \theta {\Tilde C}_{u}({\boldsymbol \lambda}_1 ) +(1-\theta)
{\Tilde C}_{u}({\boldsymbol \lambda}_2),
\end{split}
\end{equation}
where $\tilde{{\boldsymbol \lambda}}_1=[{\bf 1}_{1\times
N_r}\;{\boldsymbol \lambda}_1^T]^T$ and $\tilde{{\boldsymbol
\lambda}}_2=[{\bf 1}_{1\times N_r}\;{\boldsymbol \lambda}_2^T]^T$.
Therefore, the function ${\Tilde C}_{u}({\boldsymbol \lambda})$ is
concave on the space of nonnegative ${\boldsymbol \lambda}$, and the
optimization problem given by (\ref{Optimization: formula
objective}) and (\ref{Optimization: formula constraints}) is a
concave optimization problem. As such, there exists only one local
optimal solution, which is also a global solution.  This solution
could be evaluated by  employing standard convex optimization
algorithms, such as interior point methods \cite{Boyd}.

Since the problem is concave, we can derive necessary and sufficient conditions for the optimal solution using the
Karush-Kuhn-Tucker (KKT) conditions. To this end, let ${\boldsymbol \mu}=[\mu_1, \mu_2, ..., \mu_{N_t}]^T$
and $\nu$ be the Lagrange multipliers for the inequality constraint
${\boldsymbol \lambda} \geq 0$ and the equality constraint ${\bf
1}^T{\boldsymbol \lambda} = N_t $ respectively. Then the KKT
conditions satisfied by the optimal ${\boldsymbol \lambda}$ can be
expressed as
\begin{equation}\label{Optimization: KKT conditions }
\frac{\partial {\Tilde C}_{u}({\boldsymbol \lambda})}{\partial
\lambda_i}+ \mu_{i}+\nu=0,
\end{equation}
\begin{equation}\label{Optimization: KKT conditions others}
{\boldsymbol \lambda} \geq 0, \; \; {\bf 1}^T{\boldsymbol \lambda}=
N_t, \; \; {\boldsymbol \mu} \geq 0, \; \; \mu_i\lambda_i=0,
\end{equation}
where $\frac{\partial {\Tilde C}_{u}({\boldsymbol
\lambda})}{\partial \lambda_i}$ denotes the partial derivative of
${\Tilde C}_{u}({\boldsymbol \lambda})$ with respect to $\lambda_i$,
for $1 \leq i \leq N_t$.  From (\ref{C_u(Lambda)}), these
derivatives can be written as
\begin{equation}\label{Optimization:Derivatives}
\frac{\partial {\Tilde C}_{u}({\boldsymbol \lambda})}{\partial
\lambda_i}= \frac{1}{E({\boldsymbol \lambda})}\frac{\partial
E({\boldsymbol \lambda})}{\partial \lambda_i} ,
\end{equation}
where $E({\boldsymbol \lambda}) = \underline{\Per} \left( \gamma
{\bf \Omega} \, \diag ({\boldsymbol \lambda}) \right)$. To evaluate
the remaining derivatives in (\ref{Optimization:Derivatives}) it is
useful to apply the Laplace expansion property of permanents, given
by (\ref{Per:Laplace expansion k=1}), to express $E({\boldsymbol
\lambda})$ as follows
\begin{equation}\label{Optimization: Extended Per Laplace}
E({\boldsymbol \lambda})= p({\boldsymbol \lambda}_{(i)})+ \lambda_i
q({\boldsymbol \lambda}_{(i)}),
\end{equation}
where
\begin{align}\label{Optimization: Extended Per Laplace_p}
p({\boldsymbol \lambda}_{(i)})&= \underline{\Per}\left(\gamma {\bf
\Omega}_{(i)}\diag({\boldsymbol \lambda}_{(i)})\right),
\\
\label{Optimization: Extended Per Laplace_q} q({\boldsymbol
\lambda}_{(i)}) & = \sum\limits_{j=1}^{N_t}\gamma \omega_{j,i}
\underline{\Per}\left(\gamma {\bf \Omega}_{(i)}^{(j)}
\diag({\boldsymbol \lambda}_{(i)})\right) \nonumber \\
& = \underline{\Per}\left(\gamma {\bf \Omega}\, \diag({\boldsymbol
\lambda}_{i}) \right)-\underline{\Per}\left(\gamma {\bf
\Omega}_{(i)}\, \diag({\boldsymbol \lambda}_{(i)}) \right),
\end{align}
$\omega_{j,i}$ denotes the $(j,i)$-th element of $\bf \Omega$, ${\bf
\Omega}_{(i)}$ denotes the sub-matrix of $\bf \Omega$ obtained by
deleting the $i$-th column, ${\bf \Omega}_{(i)}^{(j)}$ denotes the
sub-matrix of $\bf \Omega$ obtained by deleting the $j$-th row and
$i$-th column, ${\boldsymbol \lambda}_{(i)}$ denotes the
$(N_t-1)\times 1$ vector obtained by deleting the $i$-th element of
${\boldsymbol \lambda}$, and ${\boldsymbol \lambda}_i$ denotes the
$N_t\times 1$ vector obtained by replacing the $i$-th element of
${\boldsymbol \lambda}$ by unity. Therefore,
(\ref{Optimization:Derivatives}) becomes
\begin{equation}\label{Optimization: Partial Derivative}
\frac{\partial {\Tilde C}_{u}({\boldsymbol \lambda})}{\partial
\lambda_i} = \frac{q({\boldsymbol \lambda}_{(i)})}{p({\boldsymbol
\lambda}_{(i)})+ \lambda_i q({\boldsymbol \lambda}_{(i)})} \; .
\end{equation}
Substituting (\ref{Optimization: Partial Derivative}) into
(\ref{Optimization: KKT conditions }) and eliminating the slack
variable ${\boldsymbol \mu}$, the KKT conditions become
\begin{equation}\label{Optimization: KKT conditions Simplified}
\lambda_i=\left(\tilde{\nu}-\frac{p({\boldsymbol \lambda}_{(i)})}{
q({\boldsymbol \lambda}_{(i)})}\right)^{+},
\end{equation}
\begin{equation} \label{Optimization: KKT conditions Power Constraint}
 {\bf
1}^T{\boldsymbol \lambda}= N_t,
\end{equation}
where $(a)^{+}=\max\{0, a\}$ and $\tilde{\nu}=1/\nu$.

In summary, we have the following theorem.
\begin{theorem} \label{Optimization: Concave and Conditions Theorem}
The expected mutual information upper bound ${\Tilde
C}_{u}({\boldsymbol \lambda})$ is concave with respect to
${\boldsymbol \lambda}$, and the necessary and sufficient conditions
for optimal power allocation are given by (\ref{Optimization: KKT
conditions Simplified}), where $\tilde{\nu}$ is chosen to satisfy
the power constraint in (\ref{Optimization: KKT conditions Power
Constraint}).
\end{theorem}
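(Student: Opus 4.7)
The plan is to prove the theorem in two stages: first concavity, then the KKT characterization.

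For concavity, I would observe that the extended permanent can be rewritten so that the Conjecture on concavity of $\log \Per({\bf A}\diag(\boldsymbol{\lambda}))$ applies. Specifically, using $\underline{\Per}(\gamma {\bf \Omega}\diag(\boldsymbol{\lambda})) = \Per([{\bf I}_{N_r}\;\gamma{\bf \Omega}]\diag(\tilde{\boldsymbol{\lambda}}))$ with $\tilde{\boldsymbol{\lambda}} = [{\bf 1}_{1\times N_r}\;\boldsymbol{\lambda}^T]^T$, the matrix $[{\bf I}_{N_r}\;\gamma{\bf \Omega}]$ has nonnegative entries. Then for any $\boldsymbol{\lambda}_1,\boldsymbol{\lambda}_2 \geq 0$ and $\theta \in [0,1]$, the string of (in)equalities already laid out in (\ref{eq:concave_proof}) directly yields $\tilde{C}_u(\theta\boldsymbol{\lambda}_1+(1-\theta)\boldsymbol{\lambda}_2) \geq \theta \tilde{C}_u(\boldsymbol{\lambda}_1)+(1-\theta)\tilde{C}_u(\boldsymbol{\lambda}_2)$, invoking Conjecture~\ref{Lemma:Concavity} in the middle step. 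Since the feasible set $\{\boldsymbol{\lambda}\geq 0,\;{\bf 1}^T\boldsymbol{\lambda}=N_t\}$ is convex, the problem in (\ref{Optimization: formula objective})--(\ref{Optimization: formula constraints}) is concave, so KKT conditions are both necessary and sufficient for global optimality.

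For the KKT derivation, I would form the Lagrangian $\mathcal{L}(\boldsymbol{\lambda},\boldsymbol{\mu},\nu) = \tilde{C}_u(\boldsymbol{\lambda}) + \boldsymbol{\mu}^T\boldsymbol{\lambda} - \nu({\bf 1}^T\boldsymbol{\lambda} - N_t)$ and write the stationarity, primal feasibility, dual feasibility, and complementary slackness conditions, giving (\ref{Optimization: KKT conditions }) and (\ref{Optimization: KKT conditions others}). The derivative $\partial \tilde{C}_u/\partial\lambda_i$ is then obtained from (\ref{Optimization:Derivatives}). The key step is to compute $\partial E/\partial \lambda_i$, for which I would invoke the Laplace-type expansion of the extended permanent along the $i$-th variable: since $E(\boldsymbol{\lambda}) = \underline{\Per}(\gamma{\bf \Omega}\diag(\boldsymbol{\lambda}))$ is linear in $\lambda_i$ with the other entries fixed, applying Lemma~\ref{Lemma: Per:Laplace expansion} (in its $k=1$ form of (\ref{Per:Laplace expansion k=1})) gives the decomposition $E(\boldsymbol{\lambda}) = p(\boldsymbol{\lambda}_{(i)}) + \lambda_i q(\boldsymbol{\lambda}_{(i)})$ with $p,q$ as in (\ref{Optimization: Extended Per Laplace_p})--(\ref{Optimization: Extended Per Laplace_q}). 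Hence $\partial E/\partial\lambda_i = q(\boldsymbol{\lambda}_{(i)})$, which yields (\ref{Optimization: Partial Derivative}).

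To finish, I would eliminate $\boldsymbol{\mu}$ using complementary slackness $\mu_i\lambda_i = 0$: whenever $\lambda_i > 0$ we must have $\mu_i = 0$, so $\partial \tilde{C}_u/\partial\lambda_i = \nu$, which after substitution and solving for $\lambda_i$ gives $\lambda_i = \tilde{\nu} - p(\boldsymbol{\lambda}_{(i)})/q(\boldsymbol{\lambda}_{(i)})$ with $\tilde{\nu}=1/\nu$; when this expression is negative, the boundary condition $\lambda_i = 0$ (and $\mu_i \geq 0$) is active. Both cases collapse into the $(\cdot)^+$ form of (\ref{Optimization: KKT conditions Simplified}), with $\tilde{\nu}$ fixed by the budget constraint (\ref{Optimization: KKT conditions Power Constraint}).

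The main obstacle is really the concavity claim, which rests on Conjecture~\ref{Lemma:Concavity}; conditional on that conjecture, everything else is a straightforward mechanical application of standard convex-optimization machinery together with the permanent Laplace expansion. The second mild subtlety is justifying the linearity of $E(\boldsymbol{\lambda})$ in each $\lambda_i$ and extracting the correct $p$ and $q$ expressions; this follows because $\diag(\boldsymbol{\lambda})$ enters $\gamma{\bf \Omega}\diag(\boldsymbol{\lambda})$ only through column $i$ scaling, so the $k$-subset expansion of the extended permanent splits into terms containing $\lambda_i$ (those selecting column $i$) and terms not containing it.
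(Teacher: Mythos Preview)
Your proposal is correct and follows essentially the same approach as the paper: concavity is obtained by rewriting the extended permanent as $\Per([{\bf I}_{N_r}\;\gamma{\bf \Omega}]\diag(\tilde{\boldsymbol{\lambda}}))$ and invoking Conjecture~\ref{Lemma:Concavity} exactly as in (\ref{eq:concave_proof}), and the optimality conditions are derived via the KKT system together with the Laplace-type expansion $E(\boldsymbol{\lambda}) = p(\boldsymbol{\lambda}_{(i)}) + \lambda_i q(\boldsymbol{\lambda}_{(i)})$, followed by elimination of the slack multipliers to obtain the water-filling form. Your remark that the concavity ultimately rests on the conjecture is also precisely the paper's stance.
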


Note that when the eigenmode channel coupling matrix
$\mathbf{\Omega}$ is square and diagonal\footnote{In this special
case, the MIMO channel is essentially reduced to a set of
non-interfering scalar subchannels.}, we have
\begin{equation}
q({\boldsymbol \lambda}_{(i)})=\omega_{i,i} p({\boldsymbol
\lambda}_{(i)}) \; ,
\end{equation}
and the conditions in (\ref{Optimization: KKT conditions
Simplified}) simplify to
\begin{equation}\label{Optimization: KKT conditions Square and Diagonal}
\lambda_i=\left(\tilde{\nu}-\frac{1}{\omega_{i,i}}\right)^{+}.
\end{equation}
This is the same formula as the water-filling solution when the
transmitter has instantaneous CSI \cite{Telatar}, and one can easily
obtain the optimal power allocation via the water-filling algorithm.
However, in the general case of an arbitrary eigenmode channel
coupling matrix, the solution can not be obtained as easily and
numerical approaches are required.

\subsection{Iterative Water-Filling Algorithm}
In this section, we propose a simple iterative water-filling
algorithm (IWFA) for evaluating the optimal power allocation policy
which satisfies (\ref{Optimization: KKT conditions Simplified}). Our
algorithm is based on observing that the right-hand side of
(\ref{Optimization: KKT conditions Simplified}) is independent of
$\lambda_{i}$, and is motivated by the IWFA methods proposed in
\cite{Yu,Jindal} for transmitter optimization of multiuser systems
with instantaneous CSI known to the transmitters. Simulation
results, given in Section \ref{sec: Simulations}, show that this
approach works very well and is highly efficient; typically
converging after only a few iterations, with the first iteration
achieving near-optimal performance.  The proposed algorithm includes
the following steps:

\begin{description}
 \item [(1)]Initialize ${\boldsymbol \lambda}^{0}={\bf 1}$, ${\Tilde C}_{u}({\boldsymbol \lambda}^{0})
 =\log\underline{\Per}(\gamma {\bf \Omega})$, and
 $k=0$.
 \item [(2)]Calculate $p({\boldsymbol
\lambda}_{(i)}^{k})= \underline{\Per}(\gamma {\bf \Omega}_{(i)}\,
\diag({\boldsymbol \lambda}_{(i)}^{k}))$ and $q({\boldsymbol
\lambda}_{(i)}^{k})=\underline{\Per}\left(\gamma {\bf \Omega} \,
\diag({\boldsymbol \lambda}_{i}^{k}) \right)-\underline{\Per}(\gamma
{\bf \Omega}_{(i)}$ \sloppy $\diag({\boldsymbol \lambda}_{(i)}^{k})
)$, $i=1,2, ..., N_t$.
\item [(3)] Calculate $\lambda_i^{k+1}=(\tilde{\nu}-\frac{p({\boldsymbol
\lambda}_{(i)}^{k})}{q({\boldsymbol \lambda}_{(i)}^{k})})^+$,
$i=1,2, ..., N_t$, via the conventional water-filling algorithm with
power constraint $\sum_{i=1}^{N_t}\lambda_i^{k+1}=N_t$.
\item [(4)] Calculate ${\Tilde C}_{u}({\boldsymbol \lambda}^{k+1})=
\log\underline{\Per}(\gamma {\bf \Omega}\,\diag{({\boldsymbol \lambda}^{k+1})})$
.
\item [(5)] If ${\Tilde C}_{u}({\boldsymbol \lambda}^{k+1})\leq {\Tilde C}_{u}({\boldsymbol
\lambda}^{k})$,  set ${\boldsymbol
\lambda}^{k+1}:=\frac{1}{N_{t}}{\boldsymbol
\lambda}^{k+1}+\frac{N_{t}-1}{N_{t}}{\boldsymbol \lambda}^{k}$, and
recalculate ${\Tilde C}_{u}({\boldsymbol \lambda}^{k+1})$.
\item [(6)] Set $k:=k+1$ and return to Step 2 until the algorithm
converges or the iteration number is equal to a predefined value.
\end{description}
Here, ${\boldsymbol \lambda}^k$ stands for the value of
${\boldsymbol \lambda}$ in the $k$-th iteration. In Step 1 in the
first iteration, ${\boldsymbol \lambda}$ is initialized to $\bf 1$,
i.e., to the equal power allocation. Note, however, that
${\boldsymbol \lambda}$ could also be initialized in a different
way. For example, in practice it is reasonable to suppose that the
channel statistics change smoothly from frame to frame, where a more
appropriate starting point for any given frame would be the optimal
value of ${\boldsymbol \lambda}$ from the previous frame. This could
speed up the convergence of the IWFA. In Step 3, the conventional
water-filling algorithm is performed with the required variables
$p({\boldsymbol \lambda}_{(i)})$ and $q({\boldsymbol
\lambda}_{(i)})$ calculated in Step 2. Following the calculation of
the ${\Tilde C}_u(\boldsymbol \lambda)$ in Step 4, Step 5 is
performed to guarantee the convergence of the iterative procedure.
We discuss this issue in detail below. In Step 6, the convergence of
the algorithm can be determined by checking whether $| {\Tilde
C}_{u}({\boldsymbol \lambda}^{k+1})-{\Tilde C}_{u}({\boldsymbol
\lambda}^{k} )|$ (or $\|{\boldsymbol \lambda}^{k+1}-{\boldsymbol
\lambda}^{k} \|$) is less than some predefined tolerance.

\begin{theorem}\label{Optimization: IWFA convergence}
The IWFA for optimal power allocation converges to the capacity
upper bound $C_u$.
\end{theorem}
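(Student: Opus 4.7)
The plan is to exploit the concavity of $\tilde{C}_{u}(\boldsymbol{\lambda})$ established in Theorem \ref{Optimization: Concave and Conditions Theorem}, and to show that the iterates produce a monotonically non-decreasing sequence of objective values that is bounded above by $C_u$, whose only possible limit is the global optimum.

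First I would note that the feasible set $\{\boldsymbol{\lambda}\geq 0,\ \mathbf{1}^T\boldsymbol{\lambda}=N_t\}$ is compact and that $\tilde{C}_u$ is continuous and concave on it, so $C_u$ is finite and attained. The iterate sequence $\{\boldsymbol{\lambda}^k\}$ therefore has convergent subsequences, and it suffices to show (i) monotonicity $\tilde{C}_u(\boldsymbol{\lambda}^{k+1})\geq \tilde{C}_u(\boldsymbol{\lambda}^{k})$ and (ii) that every limit point satisfies the KKT conditions (\ref{Optimization: KKT conditions Simplified}). By concavity (Theorem \ref{Optimization: Concave and Conditions Theorem}) any such KKT point is the unique global maximum of $\tilde{C}_u$, so convergence of the objective values to $C_u$ follows.

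The key observation for monotonicity is that the water-filling update of Step 3 is the exact maximizer of the surrogate objective $\phi^k(\boldsymbol{\lambda})=\sum_{i}\log\!\bigl(\lambda_i+p(\boldsymbol{\lambda}_{(i)}^{k})/q(\boldsymbol{\lambda}_{(i)}^{k})\bigr)$ on the simplex, and this surrogate is built so that its gradient at $\boldsymbol{\lambda}^{k}$ coincides with $\nabla \tilde{C}_u(\boldsymbol{\lambda}^{k})$ by virtue of (\ref{Optimization: Partial Derivative}). Applying the first-order concavity inequality for $\phi^k$ at $\boldsymbol{\lambda}^{k}$ together with $\phi^k(\boldsymbol{\lambda}^{k+1})\geq \phi^k(\boldsymbol{\lambda}^{k})$, I would deduce that $\langle \nabla\tilde{C}_u(\boldsymbol{\lambda}^{k}),\,\boldsymbol{\lambda}^{k+1}-\boldsymbol{\lambda}^{k}\rangle\geq 0$, with equality only when $\boldsymbol{\lambda}^{k}$ already satisfies the KKT conditions. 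Hence the update direction is an ascent direction whenever $\boldsymbol{\lambda}^{k}$ is not KKT-optimal. The safeguarded combination in Step 5, which stays inside the feasible set and moves along this ascent direction with shrinkage factor $1/N_t$, then yields a strict increase of the concave objective at every non-stationary iterate; this is where the fixed step size $1/N_t$ must be shown to be admissible (small enough that the second-order slackness of the concave function cannot overwhelm the positive first-order gain). I would formalize this via a standard Taylor/descent-lemma argument using continuity of the Hessian of $\tilde{C}_u$ on the compact feasible set.

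Granted monotonicity, $\tilde{C}_u(\boldsymbol{\lambda}^{k})$ converges to some $C^{\star}\leq C_u$. Taking a convergent subsequence $\boldsymbol{\lambda}^{k_j}\to\boldsymbol{\lambda}^{\star}$, the absence of further improvement in the limit forces $\langle\nabla\tilde{C}_u(\boldsymbol{\lambda}^{\star}),\boldsymbol{\lambda}^{\star+}-\boldsymbol{\lambda}^{\star}\rangle = 0$, where $\boldsymbol{\lambda}^{\star+}$ is the water-filling image of $\boldsymbol{\lambda}^{\star}$; combined with the ascent property this forces $\boldsymbol{\lambda}^{\star+}=\boldsymbol{\lambda}^{\star}$, i.e.\ $\boldsymbol{\lambda}^{\star}$ is a fixed point of the update, which is exactly (\ref{Optimization: KKT conditions Simplified}). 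By Theorem \ref{Optimization: Concave and Conditions Theorem}, $\boldsymbol{\lambda}^{\star}$ is then the global optimizer, so $C^{\star}=\tilde{C}_u(\boldsymbol{\lambda}^{\star})=C_u$, which proves the theorem.

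The main obstacle is the sufficiency of the fixed step size $1/N_t$ in Step 5: one has to rule out cycling or stalling by quantitatively showing that this single shrinkage produces a strict objective increase along the verified ascent direction. I expect this to reduce to bounding the Hessian of $\tilde{C}_u$, equivalently bounding higher derivatives of $\log\underline{\Per}(\gamma\boldsymbol{\Omega}\,\text{diag}(\boldsymbol{\lambda}))$, uniformly over the compact feasible set, and then comparing the resulting quadratic lower bound to the strictly positive linear gain $\langle \nabla\tilde{C}_u(\boldsymbol{\lambda}^{k}),\,\boldsymbol{\lambda}^{k+1}-\boldsymbol{\lambda}^{k}\rangle$.
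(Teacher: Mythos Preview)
Your overall architecture (monotonicity of the objective values plus a limit-point/KKT argument on a compact feasible set) is sound, and you correctly identify the surrogate $\phi^k$ whose constrained maximizer is the water-filling update. However, your route to monotonicity is genuinely different from the paper's and contains a gap precisely at the point you flag as the ``main obstacle.''

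You argue that $\boldsymbol{\lambda}^{k+1}-\boldsymbol{\lambda}^{k}$ is an ascent direction for $\tilde C_u$ and then propose to justify the fixed shrinkage $1/N_t$ via a uniform Hessian bound on the simplex. Such a descent-lemma argument would show that \emph{some sufficiently small} step along the ascent direction increases $\tilde C_u$, but it does not, without further work, certify that the specific step $1/N_t$ always does; the required step could depend on $\gamma$, $\boldsymbol{\Omega}$ and the location of $\boldsymbol{\lambda}^k$. The paper avoids this problem entirely by a direct Jensen argument: writing $\bar C_u(\boldsymbol{\lambda})=\frac{1}{N_t}\sum_{i}\log\bigl(p(\boldsymbol{\lambda}_{(i)}^{k})+\lambda_i q(\boldsymbol{\lambda}_{(i)}^{k})\bigr)$, one observes that each summand at $\boldsymbol{\lambda}^{k+1}$ equals $\tilde C_u(\boldsymbol{\mu}_i)$ where $\boldsymbol{\mu}_i$ is obtained from $\boldsymbol{\lambda}^{k}$ by replacing only its $i$-th entry with $\lambda_i^{k+1}$. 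Since $\frac{1}{N_t}\sum_i \boldsymbol{\mu}_i=\frac{1}{N_t}\boldsymbol{\lambda}^{k+1}+\frac{N_t-1}{N_t}\boldsymbol{\lambda}^{k}$, concavity of $\tilde C_u$ gives
\[
\tilde C_u(\boldsymbol{\lambda}^{k})=\bar C_u(\boldsymbol{\lambda}^{k})\le \bar C_u(\boldsymbol{\lambda}^{k+1})=\tfrac{1}{N_t}\sum_i \tilde C_u(\boldsymbol{\mu}_i)\le \tilde C_u\!\left(\tfrac{1}{N_t}\boldsymbol{\lambda}^{k+1}+\tfrac{N_t-1}{N_t}\boldsymbol{\lambda}^{k}\right),
\]
which is exactly (\ref{Optimization: IWFA convergence relation}). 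Thus the factor $1/N_t$ is not an arbitrary step size to be tuned against curvature, but falls out structurally from averaging $N_t$ single-coordinate moves. Incorporating this observation would close your gap without any Hessian estimates; your remaining limit-point argument then finishes the proof as you outlined.
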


\begin{proof}
In order to verify the convergence of our proposed IWFA for
optimal power allocation, we define the following function for a
given ${\boldsymbol \lambda}^{k}$:
\begin{equation}\label{Optimization: Auxiliary function}
\bar{C}_u({\boldsymbol \lambda})= \frac{1}{N_t} \sum
\limits_{i=1}^{N_t} \log \left(p({\boldsymbol \lambda}_{(i)}^{k})+
\lambda_i q({\boldsymbol \lambda}_{(i)}^{k})\right).
\end{equation}
It can be seen that $\bar{C}_u({\boldsymbol \lambda})$ is a concave
function with respect to ${\boldsymbol \lambda}$. The water-filling
solution in Step 3 of the IWFA is exactly equal to the solution of
maximizing $\bar{C}_u({\boldsymbol \lambda})$, for a given
${\boldsymbol \lambda}^{k}$ subject to the power constraint ${\bf
1}^T {\boldsymbol \lambda}=N_t$. Therefore, with the ${\boldsymbol
\lambda}^{k+1}$ resulting from Step 3 of the IWFA, we have
\begin{equation}\label{Optimization: Auxiliary function maximizing}
\bar{C}_u({\boldsymbol \lambda}^{k+1})\geq \bar{C}_u({\boldsymbol
\lambda}^{k}) = {\Tilde C}_{u}({\boldsymbol \lambda}^{k}).
\end{equation}
From the concavity of ${\Tilde C}_{u}({\boldsymbol \lambda})$, it
can be shown that the following relation holds:
\begin{equation}\label{Optimization: Auxiliary function relation}
\bar{C}_u({\boldsymbol \lambda}^{k+1}) \leq {\tilde C}_u
\left(\frac{1}{N_t}{\boldsymbol
\lambda}^{k+1}+\frac{N_t-1}{N_t}{\boldsymbol \lambda}^{k}\right).
\end{equation}
Combining (\ref{Optimization: Auxiliary function maximizing}) and
(\ref{Optimization: Auxiliary function relation}) yields
\begin{equation}\label{Optimization: IWFA convergence relation}
{\Tilde C}_{u}({\boldsymbol \lambda}^{k}) \leq {\tilde C}_u
\left(\frac{1}{N_t}{\boldsymbol
\lambda}^{k+1}+\frac{N_t-1}{N_t}{\boldsymbol \lambda}^{k}\right).
\end{equation}
Therefore, after Step 5 of the IWFA, we have that ${\Tilde
C}_{u}({\boldsymbol \lambda}^{k+1}) \geq {\Tilde C}_{u}({\boldsymbol
\lambda}^{k})$. This, along with the fact that the problem
(\ref{Optimization: formula objective})-(\ref{Optimization: formula
constraints}) is convex, completes the proof.
\end{proof}

Notice that the relation (\ref{Optimization: IWFA convergence
relation}) suggests, mathematically, to update $\boldsymbol \lambda$
with $(\frac{1}{N_{t}}{\boldsymbol
\lambda}^{k+1}+\frac{N_{t}-1}{N_{t}}{\boldsymbol \lambda}^{k})$ in
the $k$-th iteration of the IWFA, whereas the KKT conditions
(\ref{Optimization: KKT conditions Simplified}) suggest a more
intuitive interpretation based on the water-filling principle. In
our proposed IWFA, we update $\boldsymbol \lambda$ with the
water-filling solution if the resulting ${\Tilde C}_{u}(\boldsymbol
\lambda)$ is increased. This allows very fast convergence, as we
demonstrate through simulations in the following section.
To guarantee the convergence, we use $(\frac{1}{N_{t}}{\boldsymbol
\lambda}^{k+1}+\frac{N_{t}-1}{N_{t}}{\boldsymbol \lambda}^{k})$ to
replace the water-filling solution when the resulting ${\Tilde
C}_{u}(\boldsymbol \lambda)$ is not increasing in each iteration.

\section{Simulations}\label{sec: Simulations}

In this section, we present numerical results to evaluate the
tightness of the capacity bound, and to demonstrate the efficiency
and performance of the proposed transmitter optimization approach
under SET.  We consider a MIMO system with five transmit and five
receive antennas, and present results for both the
jointly-correlated MIMO channel model and the Kronecker-correlation
model. For the jointly-correlated channel, we adopt the same channel
parameters as used in \cite{Veeravalli}, where $\bf D= 0$ and $\bf
\Omega$ has the following structure
\begin{equation}
{\bf \Omega}= \frac{25}{5.7} \left(
\begin{array}{ccccc} 0.1 & 0& 1& 0 & 0 \\ 0& 0.1 &1 & 0& 0 \\ 0& 0& 1 &0 &0
\\ 0& 0 &1 &0.25& 0\\0&0&1&0 &0.25
\end{array}\right).
\end{equation}
For the Kronecker channel, we adopt the constant-correlation model
for constructing the transmit and receive correlation matrices
\cite{Shin}. An $N \times N$ constant-correlation matrix is given by
\begin{equation}\label{eq:constCorrModel}
{\bf{\Theta }}_N \left( \alpha  \right) = \alpha {\bf{1 }}_{N \times
N} + \left( {1 - \alpha } \right){\bf{I}}_N,
\end{equation}
where $ \alpha \in \left[ {0,1} \right]$ is the correlation
coefficient. We set the transmit and receive correlation
coefficients to be $\alpha_t=0.4$ and $\alpha_r=0.6$ respectively.

Fig. \ref{fig:fig1} compares our closed-form ergodic mutual
information upper bound (\ref{capacity upbound}) with Monte-Carlo
simulated exact curves based on (\ref{mutual information}), for the
case $\boldsymbol \lambda=\bf 1$ (equal-power allocation). Results
are shown for both the jointly-correlated channel and the Kronecker
channel, with the above settings. We see that the upper bound is
rather tight for both channel models, especially for low to moderate
SNRs (eg.\ $< 8$ dB). Moreover, we see that the bound for the
Kronecker model is slightly tighter than for the more general
jointly-correlated model. Interestingly, we will show that, despite
this difference in tightness, the low-complexity power allocation
policies derived based on these bounds perform near-optimally for
both the Kronecker and jointly-correlated channel models.

Fig. \ref{fig:fig2} and Fig. \ref{fig:fig3} present the ergodic
mutual information achieved by the SET approach employing the
proposed IWFA (derived based on our closed-form upper bound), in the
jointly-correlated and Kronecker channel scenarios, respectively.
For comparison, the exact ergodic capacity curves are also shown,
which were obtained by numerically evaluating
(\ref{capacity_optimized}) using a constrained optimization function
of the Matlab optimization toolbox. The ergodic mutual information
achieved with equal power allocation (\ref{eq:EqualPower}) and
beamforming (\ref{eq:beamforming}) are also shown for further
comparison.  We clearly see that, for both channel models, the
proposed SET approach performs near-optimally, suffering
\emph{almost negligible loss} compared with the true channel
capacity. Furthermore, we see that equal power allocation and
beamforming are optimal in the high and low SNR regimes,
respectively, which agrees with our analytical conclusions put forth
in Section \ref{sec:OPA_Asymptotic}. The capacity upper bound curve
is also shown on the figures, and once again is seen to be tight.

Fig. \ref{fig:fig4} and Fig. \ref{fig:fig5} demonstrate the
convergence of the proposed IWFA for optimal power allocation in the
jointly-correlated and Kronecker channel scenarios, respectively.
Here, the SNR $\rho$ was set to 10 dB, and in all cases the
algorithm was initialized using ${\boldsymbol \lambda}^0 =
\mathbf{1}$. These figures show the evolution of the eigenvalues
$\lambda_i$, $i=1,\ldots,5$, and the capacity bound ${\Tilde
C}_u(\boldsymbol \lambda)$ for each iteration. From these results,
we see that the proposed IWFA converges after only a few iterations,
with the first iteration achieving near-optimal performance in all
cases.

\section{Conclusions}\label{sec:conclusion}
We have investigated statistical eigenmode transmission over a
general jointly-correlated MIMO channel. For this channel, we
derived a tight closed-form upper bound for the ergodic capacity,
which reveals a simple and interesting relationship in terms of
matrix permanents of the eigenvalue coupling matrix, and embraces
many existing results in the literature as special cases. Based on
this expression, we proposed and investigated new power allocation
policies in the framework of convex optimization. In particular, we
obtained necessary and sufficient optimality conditions, and
developed an efficient iterative water-filling algorithm with
guaranteed convergence. The tightness of the capacity bound and the
performance of our novel low-complexity transmitter optimization
approach was confirmed through simulations. Our approach was shown
to suffer near-negligible loss compared with the ergodic capacity of
the jointly-correlated MIMO channel.

\appendices

\section{Proof of Lemma \ref{lemma: Per:Polynomial expansion}}
\label{Appendix: Proof of lemma: Per:Polynomial expansion}

Let $\underline{\bf I}= \left[ {\bf I}_M \, {\bf 0}_{M\times
N}\right]$ and $\underline{\bf A}= \left[ {\bf 0}_{M\times M}\, {\bf
A} \right]$. From the definition of the permanents, we have
\begin{align}
\Per (\left[{\bf I}_{M} \, {\bf A}\right])
& = \Per (\bf{\underline{A}} + \bf{\underline{I}}) \nonumber\\
& = \sum \limits_{\hat{\beta}_M \in \mathcal{S}_{M+N}^{M}} \prod
\limits_{m=1}^{M} \left( \underline{a}_{m,\beta_m}+\underline{i}_{m,\beta_m}\right),\label{eq:74}
\end{align}
where $\underline{i}_{m,n}$ and $\underline{a}_{m,n}$ denote the
$(m,n)$-th elements of $\underline{\bf I}$ and $\underline{\bf A}$
respectively. Note that the following identity holds:
\begin{equation}
\prod \limits_{m=1}^{M} \left(x_m+y_m\right)= \sum \limits_{k=0}^{M}
\sum \limits_{\hat{\alpha}_k \in \mathcal{S}_{M}^{(k)}} \prod
\limits_{m=1}^{k} x_{\alpha_m} \prod \limits_{m=1}^{M-k}
y_{\alpha_m^{'}},
\end{equation}
where $(\alpha_1^{'}, \alpha_2^{'}, ..., \alpha_{M-k}^{'}) \in
\mathcal{S}_{M}^{M-k}$ is the sequence complementary to
$\hat{\alpha}_k$ in $\{1, 2, ..., M \}$. Hence
\begin{equation}\label{eq:76}
\Per (\left[{\bf I}_{M} \, {\bf A}\right])
= \sum \limits_{k=0}^{M}
\sum \limits_{\hat{\alpha}_k \in \mathcal{S}_{M}^{(k)}} \left( \sum
\limits_{\hat{\beta}_M \in \mathcal{S}_{M+N}^{M}} \prod
\limits_{m=1}^{k} \underline{a}_{\alpha_m, \beta_{\alpha_{m}}} \prod
\limits_{m=1}^{M-k} \underline{i}_{\alpha_{m}^{'},
\beta_{\alpha_{m}^{'}}} \right).
\end{equation}
It can be seen that $ \underline{i}_{\alpha_m^{'},
\beta_{\alpha_{m}^{'}}}= \delta
(\beta_{\alpha_{m}^{'}}-\alpha_m^{'})$, where $\delta(\cdot)$ is the
Kronecker delta operator, and $\prod _{m=1}^{k}
\underline{a}_{\alpha_m, \beta_{\alpha_{m}}}\neq 0$ only if
$\beta_{\alpha_{m}}>M$ and $k\leq {\rm min}(M,N)$. Therefore, we
have
\begin{align}
\Per (\left[{\bf I}_{M} \, {\bf A}\right]) & =\sum
\limits_{k=0}^{{\rm min}(M,N)} \sum \limits_{\hat{\alpha}_k \in
\mathcal{S}_{M}^{(k)}} \sum \limits_{\hat{\beta}_k \in
\mathcal{S}_{N}^{k}} \prod \limits_{m=1}^{k}
\underline{a}_{\alpha_m, M+\beta_{m}} \nonumber \\
&= \sum \limits_{k=0}^{{\rm min}(M,N)} \sum
\limits_{\hat{\alpha}_k \in \mathcal{S}_{M}^{(k)}} \sum
\limits_{\hat{\beta}_k \in \mathcal{S}_{N}^{(k)}} \Per \left({\bf
A}^{\hat{\alpha}_k}_{{\hat{\beta}_k}} \right),
\end{align}
where  $\Per \left({\bf A}^{\hat{\alpha}_k}_{{\hat{\beta}_k}}
\right)=1$ when $k=0$. Using (\ref{Per: property 7}), we have
\begin{align}
\Per (\left[{\bf I}_{M} \, {\bf A}\right]) &= \sum \limits_{k=0}^{\min(M,N)} \sum
\limits_{{\hat{\alpha}}_k\in \mathcal{S}_M^{(k)}} \Per ({\bf A}^{\hat\alpha_k }) \nonumber \\
&=  \sum \limits_{k=0}^{\min(M,N)} \sum \limits_{{\hat{\beta}}_k\in \mathcal{S}_N^{(k)}} \Per({\bf
A}_{\hat\beta_k }).
\end{align}
Through a similar procedure, one can obtain that
\begin{align}
\Per (\left[{\bf I}_{N} \, {\bf A}^{T} \right])& =  \sum \limits_{k=0}^{\min(M,N)} \sum
\limits_{{\hat{\alpha}}_k\in \mathcal{S}_M^{(k)}} \Per ({\bf A}^{\hat\alpha_k }) \nonumber \\
&=  \sum \limits_{k=0}^{\min(M,N)} \sum \limits_{{\hat{\beta}}_k\in \mathcal{S}_N^{(k)}} \Per({\bf
A}_{\hat\beta_k }).
\end{align}
This completes the proof.

\section{Proof of Lemma \ref{Lemma: E_det_prod}}\label{Appendix: Proof of Lemma: E_det_prod}

From the definition of the determinant, we have
\begin{equation}\label{Lemma1:eq1}
\mathbb{E}\left\{ \det\left( \bf X \right) \det\left( {\bf X}^{H}
\right) \right\} = \sum\limits_{\hat{\alpha}_{N}\in \mathcal
{S}_N^N}\sum\limits_{\hat{\beta}_{N}\in \mathcal {S}_N^N}
\!\!\! (-1)^{\sigma (\hat{\alpha}_{N})}(-1)^{\sigma
(\hat{\beta}_{N})}\mathbb{E}\left\{\prod\limits_{i=1}^{N}
x_{i,\alpha_{i}} x_{i,\beta_{i}}^{*}\right\},
\end{equation}
where ${\sigma (\hat{\alpha}_{N})}$ denotes the number of inversions
in the permutation $\hat{\alpha}_{N}$ from the normal order
$1,2,\ldots ,N$, and $x_{i,j}$ is the $(i,j)$-th element of $\bf X$.
Since the rows of $\bf X$ are independent, we have
\begin{equation}\label{Lemma1:eq2}
\mathbb{E}\left\{\prod\limits_{i=1}^{N} x_{i,\alpha_{i}}
x_{i,\beta_{i}}^{*}\right\}=\prod\limits_{i=1}^{N}\mathbb{E}\left\{
x_{i,\alpha_{i}} x_{i,\beta_{i}}^{*}\right\}.
\end{equation}
Since the elements in each row are independent and there is only one
possible non-zero mean element in each row, we have
\begin{equation}\label{Lemma1:eq3}
\mathbb{E}\left\{ x_{i,\alpha_{i}}
x_{i,\beta_{i}}^{*}\right\}=\xi_{i,\alpha_i}\delta(\beta_i-\alpha_i),
\end{equation}
where $\xi_{i,j}$ is the $(i,j)$-th element of $\bf \Xi $.
Substituting (\ref{Lemma1:eq3}) into (\ref{Lemma1:eq2}) and then
into (\ref{Lemma1:eq1}) yield
\begin{equation}\label{Lemma1:eq4}
\mathbb{E}\left\{ \det\left( \bf X \right) \det\left( {\bf X}^{H}
\right) \right\}= \sum\limits_{\hat{\alpha}_{N}\in \mathcal
{S}_N^N}\prod\limits_{i=1}^{N} \xi_{i,\alpha_{i}}=\Per \left({\bf
\Xi} \right).
\end{equation}
This completes the proof.

\section{Proof of the concavity of $f(\boldsymbol \lambda)=\log \Per({\bf A}\diag(\boldsymbol
\lambda))$ in several cases }\label{Appendix: Proof of
Lemma:Concavity}

{\emph Case 1}: $M\geq N$. In this case, we have that $f(\boldsymbol
\lambda)=\log \Per({\bf A})+\log\det(\diag(\boldsymbol \lambda))$.
The concavity of $f(\boldsymbol \lambda)$ comes from that of
$\log\det(\diag(\boldsymbol \lambda))$ \cite{Boyd}.

{\emph Case 2}: $M=1$ and $N>1$. In this case, $\bf A$ is a row
vector, and we have that $f(\boldsymbol \lambda)=\log({\bf
A}\boldsymbol \lambda)$. The concavity of $f(\boldsymbol \lambda)$
comes from that of the log function.

{\emph Case 3}: $M=2$ and $N>2$. In this case, we will first show
that the following inequality holds:
\begin{equation}\label{Inequality 1 for M=2}
\frac{g_2(\boldsymbol \lambda_1+\boldsymbol
\lambda_2)}{g_1(\boldsymbol \lambda_1+\boldsymbol
\lambda_2)}\geq\frac{g_2(\boldsymbol \lambda_1)}{g_1(\boldsymbol
\lambda_1)}+\frac{g_2(\boldsymbol \lambda_2)}{g_1(\boldsymbol
\lambda_2)},
\end{equation}
where $g_1(\boldsymbol \lambda)= {\bf 1}_{1\times 2}{\bf
A}\diag(\boldsymbol \lambda)$ and $g_2(\boldsymbol \lambda)=
\Per({\bf A}\diag(\boldsymbol \lambda))$. Then we will prove the
concavity of $f(\boldsymbol \lambda)$ from (\ref{Inequality 1 for
M=2}).

Since $g_1(\boldsymbol \lambda)$ and $g_2(\boldsymbol \lambda)$ are
positive on $\mathcal{D}^N$, the inequality (\ref{Inequality 1 for
M=2}) holds if and only if the following inequality does:
\begin{equation}\label{Inequality 2 for M=2}
g(\boldsymbol \lambda_1, \boldsymbol \lambda_2)
= g_2(\boldsymbol \lambda_1+\boldsymbol \lambda_2)g_1(\boldsymbol
\lambda_1)g_1(\boldsymbol \lambda_2)
- g_2(\boldsymbol \lambda_1)g_1(\boldsymbol \lambda_1+\boldsymbol
\lambda_2)g_1(\boldsymbol \lambda_2)
- g_2(\boldsymbol \lambda_2)g_1(\boldsymbol \lambda_1+\boldsymbol
\lambda_2)g_1(\boldsymbol \lambda_1)
\geq 0.
\end{equation}
Let ${\bf A}=[{\bf a}_1^T\;{\bf a}_2^{T}]^T$. Then we have that
$g_1(\boldsymbol \lambda)={\bf a}_1^T\boldsymbol \lambda+{\bf
a}_2^T\boldsymbol \lambda$ and $g_2(\boldsymbol \lambda)={\bf
a}_1^T\boldsymbol \lambda {\bf a}_2^T\boldsymbol \lambda-
\boldsymbol \lambda ^T\diag({\bf a}_1\odot {\bf a}_2)\boldsymbol
\lambda$. By substituting these expressions into $g(\boldsymbol
\lambda_1, \boldsymbol \lambda_2)$, we can obtain
\begin{equation}\label{g-fun for M=2}
g(\boldsymbol \lambda_1, \boldsymbol \lambda_2)
 = ({\bf a}_1^T\boldsymbol \lambda_1 {\bf a}_2^T\boldsymbol \lambda_2 -{\bf
a}_1^T\boldsymbol \lambda_2 {\bf a}_2^T\boldsymbol
\lambda_1)^2
  + (g_1(\boldsymbol \lambda_2)\boldsymbol
\lambda_1-g_1(\boldsymbol \lambda_1)\boldsymbol \lambda_2)
^T\diag({\bf a}_1\odot {\bf a}_2)
  (g_1(\boldsymbol
\lambda_2)\boldsymbol \lambda_1-g_1(\boldsymbol
\lambda_1)\boldsymbol \lambda_2).
\end{equation}
Therefore we achieve (\ref{Inequality 2 for M=2}) and then
(\ref{Inequality 1 for M=2}). From (\ref{Inequality 1 for M=2}), we
have
\begin{equation}\label{Inequality 3 for M=2}
\frac{g_2(\theta \boldsymbol \lambda_1+(1-\theta )\boldsymbol
\lambda_2)}{g_1(\theta \boldsymbol \lambda_1+(1-\theta )\boldsymbol
\lambda_2)}\geq \theta  \frac{g_2(\boldsymbol
\lambda_1)}{g_1(\boldsymbol \lambda_1)}+(1-\theta )
\frac{g_2(\boldsymbol \lambda_2)}{g_1(\boldsymbol \lambda_2)},
\end{equation}
where $0\leq \theta \leq 1$. Taking logarithm on both sides and
using the concavity of the log function yields
\begin{equation}\label{Inequality 4 for M=2}
\begin{split}
& f(\theta \boldsymbol \lambda_1+(1-\theta )\boldsymbol
\lambda_2)-\theta f(\boldsymbol \lambda_1)-(1-\theta )f(\boldsymbol
\lambda_2) \\
& \quad \geq \log (g_1(\theta \boldsymbol
\lambda_1+(1-\theta )\boldsymbol \lambda_2))+\theta \log
(g_1(\boldsymbol \lambda_1)) - (1-\theta ) \log (g_1(\boldsymbol \lambda_2)) \geq 0.
\end{split}
\end{equation}
This completes the proof of the concavity of $f(\boldsymbol \lambda)$.

{\emph Case 4}: $\bf A$ is of rank one. Let ${\bf A}={\bf a}{\bf
b}^T$, where $\bf a$ and $\bf b$ are vectors of $M$ and $N$ elements
respectively. In this case, we have
\begin{equation}\label{Concavity rank one}
\begin{split}
f(\boldsymbol \lambda)
& = \log \Per({\bf 1}_{M\times N} \diag({\bf b}
\odot \boldsymbol \lambda))+\log\det(\diag(\bf a)) \\
& = \log \sum \limits_{{\hat{\alpha}_{M}}\in \mathcal {S}_{N}^{(M)}} {
\Per \left(({\bf 1}_{M\times N} \diag({\bf b} \odot \boldsymbol
\lambda))_{\hat{\alpha}_{N}}\right)}  + \log\det(\diag(\bf a)) \\
& = \log E_M({\bf b} \odot \boldsymbol
\lambda)+\log(M!)+\log\det(\diag(\bf a)),
\end{split}
\end{equation}
where the function $E_M(\boldsymbol \lambda)$ is the $M$-th
elementary symmetric function defined as \cite{Marshall}
\begin{equation} \label{elementary symetric function}
E_M(\boldsymbol \lambda)=\sum\limits_{{\hat{\alpha}_{M}}\in \mathcal
{S}_{N}^{(M)}}\prod_{i=1}^{M}\lambda_{\alpha_i}.
\end{equation}
Since $E_M(\boldsymbol \lambda)$ is logarithmically concave, we
obtain from (\ref{Concavity rank one}) that $f(\boldsymbol \lambda)$
is concave.

\section{Proof of Lemma \ref{Lemma: Per: poly in z}}\label{Appendix: Proof of Lemma: Per: poly in z}

We consider the case with $N_r \leq N_t$. The proof for the case
with $N_r > N_t$ is similar. From the definition of the permanents,
we have
\begin{equation} \label{Lemma Per Poly: eq1}
 \Per ({\hat{\bf \Omega
}(z)}) =  \sum \limits_{{\hat{\alpha}_{N_r}}\in \mathcal
{S}_{N_t}^{N_r}} \prod \limits_{i=1}^{N_r} (1+\hat{\omega}
_{i,{\alpha}_{i}}z),
\end{equation}
where $\hat{\omega} _{i,j}$ represents the $(i,j)$-th element of
$\hat{ \bf
 \Omega }$. For each product term in the above expression, the
following relation holds:
\begin{equation} \label{Lemma Per Poly: eq2}
 \prod \limits_{i=1}^{N_r} (1+\hat{\omega}
_{i,{\alpha}_{i}}z)= \sum \limits_{k=0}^{N_r} z^{k}\sum
\limits_{{\hat{\beta}_{k}}\in \mathcal {S}_{N_r}^{(k)}}\prod
\limits_{i=1}^{k} \hat{\omega} _{\beta_{i}, \alpha_{\beta_{i}}}.
\end{equation}
Substituting (\ref{Lemma Per Poly: eq2}) into (\ref{Lemma Per Poly:
eq1}) yields
\begin{align} \label{Lemma Per Poly: eq3}
\Per ({\hat{\bf \Omega}(z)}) &= \sum \limits_{k=0}^{N_r} z^{k}\sum
\limits_{{\hat{\beta}_{k}}\in \mathcal {S}_{N_r}^{(k)}}\sum
\limits_{{\hat{\alpha}_{N_r}}\in \mathcal {S}_{N_t}^{N_r}}\prod
\limits_{i=1}^{k} \hat{\omega} _{\beta_{i}, \alpha_{\beta_{i}}} \nonumber \\
&= \sum \limits_{k=0}^{N_r} z^{k}\frac{(N_t-k)!}{(N_t-N_r)!}\sum
\limits_{{\hat{\beta}_{k}}\in \mathcal {S}_{N_r}^{(k)}}\sum
\limits_{{\hat{\alpha}_{k}}\in \mathcal {S}_{N_t}^{k}}\prod
\limits_{i=1}^{k} \hat{\omega }_{\beta_{i}, \alpha_{i}} \nonumber \\
&= \sum \limits_{k=0}^{N_r} z^{k}\frac{(N_t-k)!}{(N_t-N_r)!}\sum
\limits_{{\hat{\beta}_{k}}\in \mathcal {S}_{N_r}^{(k)}}
\Per(\hat{\bf \Omega }^{\hat{\beta}_{k}}).
\end{align}
From Lemma \ref{lemma: Per:Polynomial expansion}, we have the
expansion of $\underline{\Per}( \hat{\bf \Omega })$. By comparing
the resulting expansion of $\underline{\Per}( \hat{\bf \Omega })$
with (\ref{Lemma Per Poly: eq3}), we complete the proof.


\begin{thebibliography}{10}
\providecommand{\url}[1]{#1}
\csname url@rmstyle\endcsname
\providecommand{\newblock}{\relax}
\providecommand{\bibinfo}[2]{#2}
\providecommand\BIBentrySTDinterwordspacing{\spaceskip=0pt\relax}
\providecommand\BIBentryALTinterwordstretchfactor{4}
\providecommand\BIBentryALTinterwordspacing{\spaceskip=\fontdimen2\font plus
\BIBentryALTinterwordstretchfactor\fontdimen3\font minus
  \fontdimen4\font\relax}
\providecommand\BIBforeignlanguage[2]{{%
\expandafter\ifx\csname l@#1\endcsname\relax
\typeout{** WARNING: IEEEtran.bst: No hyphenation pattern has been}%
\typeout{** loaded for the language `#1'. Using the pattern for}%
\typeout{** the default language instead.}%
\else
\language=\csname l@#1\endcsname
\fi
#2}}

\bibitem{Telatar}
I.~E. Telatar, ``Capacity of multi-antenna {G}aussian channels,'' \emph{Eur.
  Trans. Telecommun.}, vol.~10, no.~6, pp. 586--595, Nov./Dec. 1999.

\bibitem{Foschini_limit}
G.~J. Foschini and M.~J. Gans, ``On limits of wireless communications in a
  fading environment when using multiple antennas,'' \emph{Wireless Pers.
  Commun.}, vol.~6, no.~3, pp. 311--335, Mar. 1998.

\bibitem{Paulraj03}
A.~Paulraj, R.~Nabar, and D.~Gore, \emph{{Introduction to Space-Time Wireless
  Communications}}.\hskip 1em plus 0.5em minus 0.4em\relax Cambridge University
  Press, 2003.

\bibitem{Gershman}
A.~B. Gershman and N.~Sidiropoulos, \emph{Space-Time Processing for {MIMO}
  Communications}.\hskip 1em plus 0.5em minus 0.4em\relax John Wiley $\&$ Sons,
  Ltd, 2005.

\bibitem{Giannakis07}
G.~B. Giannakis, Z.~Q. Liu, X.~L. Ma, and S.~L. Zhou, \emph{Space-Time Coding
  for Broadband Wireless Communications}.\hskip 1em plus 0.5em minus
  0.4em\relax John Wiley $\&$ Sons, Ltd, 2007.

\bibitem{Stuber04}
G.~L. Stuber, J.~R. Barry, S.~W. Mclaughlin, Y.~Li, M.~A. Ingram, and T.~G.
  Pratt, ``{Broadband MIMO-OFDM wireless communications},'' \emph{Proc.
  {IEEE}}, vol.~92, no.~2, pp. 271--294, Feb. 2004.

\bibitem{Barriac2004}
G.~Barriac and U.~Madhow, ``Space-time communication for {OFDM} with implicit
  channel feedback,'' \emph{{IEEE} Trans. Inform. Theory}, vol.~50, no.~12, pp.
  3111--3129, Dec. 2004.

\bibitem{Barriac2006}
------, ``Space-time precoding for mean and convariance feedback: {A}pplication
  to wideband {OFDM},'' \emph{{IEEE} Trans. Commun.}, vol.~54, no.~1, pp.
  96--107, Jan. 2006.

\bibitem{Shui}
D.~S. Shui, G.~J. Foschini, M.~J. Gans, and J.~M. Kahn, ``Fading correlation
  and its effect on the capacity of multielement antenna systems,''
  \emph{{IEEE} Trans. Commun.}, vol.~48, no.~3, pp. 502--513, Mar. 2000.

\bibitem{Shin}
H.~Shin and J.~H. Lee, ``Capacity of multiple-antenna fading channels:
  {S}patial fading correlation, double scattering, and keyhole,'' \emph{{IEEE}
  Trans. Inform. Theory}, vol.~49, no.~10, pp. 2636--2647, Oct. 2003.

\bibitem{Chiani}
M.~Chiani, M.~Z. Win, and A.~Zanella, ``On the capacity of spatially correlated
  {MIMO} {R}ayleigh-fading channels,'' \emph{{IEEE} Trans. Inform. Theory},
  vol.~49, no.~10, pp. 2363--2371, Oct. 2003.

\bibitem{Cui_TC}
X.~W. Cui, Q.~T. Zhang, and Z.~M. Feng, ``Generic procedure for tightly
  bounding the capacity of {MIMO} correlated {R}ician fading channels,''
  \emph{{IEEE} Trans. Commun.}, vol.~53, no.~5, pp. 890--898, May 2005.

\bibitem{Jin}
S.~Jin, X.~Q. Gao, and X.~H. You, ``On the ergodic capacity of rank-1 {R}icean
  fading {MIMO} channels,'' \emph{{IEEE} Trans. Inform. Theory}, vol.~53,
  no.~2, pp. 502--517, Feb. 2007.

\bibitem{Mckay}
M.~R. McKay and I.~B. Collings, ``General capacity bounds for spatially
  correlated {R}ician {MIMO} channels,'' \emph{{IEEE} Trans. Inform. Theory},
  vol.~51, no.~9, pp. 3121--3145, Sep. 2005.

\bibitem{Jafar_comm}
S.~A. Jafar and A.~Goldsmith, ``Transmitter optimization and optimality of
  beamforming for multiple antenna systems,'' \emph{{IEEE} Trans. Wireless
  Commun.}, vol.~3, no.~4, pp. 1165--1175, Jul. 2004.

\bibitem{Jorswieck}
E.~A. Jorswieck and H.~Boche, ``Channel capacity and capacity-range of
  beamforming in {MIMO} wireless systems under correlated fading with
  covariance feedback,'' \emph{{IEEE} Trans. Wireless Commun.}, vol.~3, no.~5,
  pp. 1543--1553, Sep. 2004.

\bibitem{Kiessling_vtc}
M.~Kiessling, J.~Speidel, I.~Viering, and M.~Reinhardt, ``A closed-form bound
  on correlated {MIMO} channel capacity,'' in \emph{Proc. IEEE VTC 2002-Fall,
  Vancouver, Canada}, vol.~2, Sep. 2002, pp. 859--863.

\bibitem{Kiessling_diss}
M.~Kiessling, \emph{Statistical Analysis and Transmit Prefiltering for {MIMO}
  Systems in Correlated Fading Environments}.\hskip 1em plus 0.5em minus
  0.4em\relax Ph.D. dissertation, Institute of Telecommunications, University
  of Stuttgart, Germany, Oct. 2004.

\bibitem{Boche_sp}
E.~A. Jorswieck and H.~Boche, ``Optimal transmission strategies and impact of
  correlation in multiantenna systems with different types of channel state
  information,'' \emph{{IEEE} Trans. Signal Processing}, vol.~52, no.~12, pp.
  3340--3453, Dec. 2004.

\bibitem{Kermoal}
J.~P. Kermoal, L.~Schumacher, K.~I. Pedersen, and P.~Mogensen, ``A stochastic
  {MIMO} radio channel model with experimental validation,'' \emph{{IEEE} J.
  Select. Areas Commun.}, vol.~20, no.~6, pp. 1211--1226, Jun. 2002.

\bibitem{Weichselberger}
W.~Weichselberger, M.~Herdin, H.~Ozcelik, and E.~Bonek, ``A stochastic {MIMO}
  channel model with joint correlation of both link ends,'' \emph{{IEEE} Trans.
  Wireless Commun.}, vol.~5, no.~1, pp. 90--100, Jan. 2006.

\bibitem{Zhou}
Y.~Zhou, M.~Herdin, A.~Sayeed, and E.~Bonek, ``Experimental study of {MIMO}
  channel statistics and capacity via the virtual channel representation,''
  \emph{{Revised, IEEE Trans. Antennas Propagation}}, 2006.

\bibitem{Sayeed}
A.~M. Sayeed, ``Deconstructing multi-antenna fading channels,'' \emph{{IEEE}
  Trans. Signal Processing}, vol.~50, no.~10, pp. 2563--2579, Oct. 2002.

\bibitem{Veeravalli}
V.~Veeravalli, Y.~Liang, and A.~M. Sayeed, ``Correlated {MIMO} {R}ayleigh
  fading channels: {C}apacity, optimal signaling and asymptotics,''
  \emph{{IEEE} Trans. Inform. Theory}, vol.~51, no.~6, pp. 2058--2072, Jun.
  2005.

\bibitem{Tulino2005}
A.~M. Tulino, A.~Lozano, and S.~Verd$\acute{{\rm u}}$, ``Impact of antenna
  correlation on the capacity of multiantenna channels,'' \emph{{IEEE} Trans.
  Inform. Theory}, vol.~51, no.~7, pp. 2491--2509, Jul. 2005.

\bibitem{Tulino2006}
------, ``Capacity-achieving input convariance for single-user multi-antena
  channels,'' \emph{{IEEE} Trans. Wireless Commun.}, vol.~5, no.~3, pp.
  662--671, Mar. 2006.

\bibitem{Visotsky}
E.~Visotsky and U.~Madhow, ``Space-time transmit precoding with imperfect
  feedback,'' \emph{{IEEE} Trans. Inform. Theory}, vol.~47, no.~6, pp.
  2632--2639, Sep. 2001.

\bibitem{Venkatesan}
S.~Venkatesan, S.~H. Simon, and R.~A. Valenzuela, ``Capacity of a {G}aussian
  {MIMO} channel with nonzero mean,'' in \emph{Proc. IEEE VTC 2003-Fall,
  Orlando, USA}, vol.~3, Oct. 2003, pp. 1767--1771.

\bibitem{Hosli}
D.~H$\ddot{o}$sli and A.~Lapidoth, ``The capacity of a {MIMO} {R}ician channel
  is monotonic in the singular values of the mean,'' in \emph{Proc. 5th Int.
  ITG Conference on Source and Channel Coding, Erlangen, Germany}, Jan. 2004,
  pp. 381--385.

\bibitem{Vu}
M.~Vu and A.~Paulraj, ``Charactering the capacity for {MIMO} wireless channels
  with non-zero mean and transmit covariance,'' in \emph{Proc. 43rd Ann.
  Allerton Conf. on Comm., Control, and Comp., Monticello, IL, USA}, Sep. 2005.

\bibitem{Hanlen}
L.~W. Hanlen and A.~J. Grant, ``Optimal transmit covariance for {MIMO} channels
  with statistical transmitter side information,'' in \emph{Proc. IEEE ISIT
  2005, Adelaide, Australia}, Sep. 2005, pp. 1818--1822.

\bibitem{Yu}
W.~Yu, W.~Rhee, S.~Boyd, and J.~M. Cioffi, ``Iterative water-filling for
  {G}aussian vector multiple-access channels,'' \emph{{IEEE} Trans. Inform.
  Theory}, vol.~50, no.~1, pp. 145--152, Jan. 2004.

\bibitem{Jindal}
N.~Jindal, W.~Rhee, S.~Vishwanath, S.~A. Jafar, and A.~Goldsmith, ``Sum power
  iterative water-filling for multi-antenna {G}aussian broadcast channels,''
  \emph{{IEEE} Trans. Inform. Theory}, vol.~51, no.~4, pp. 1570--1580, Apr.
  2005.

\bibitem{You05}
X.~H. You, G.~A. Chen, M.~Chen, and X.~Q. Gao, ``{Toward beyond 3G: the FuTURE
  project of China},'' \emph{{IEEE} Signal Processing Mag.}, vol.~43, no.~1,
  pp. 70--75, Jan. 2005.

\bibitem{Minc}
H.~Minc, \emph{Permanents}.\hskip 1em plus 0.5em minus 0.4em\relax
  Addison-Wesley Publishing Company, 1978.

\bibitem{Ryser}
H.~J. Ryser, \emph{Combinatorial Mathematics}.\hskip 1em plus 0.5em minus
  0.4em\relax Wiley, Mathematical Association of America, 1963.

\bibitem{Nijenhuis}
A.~Nijenhuis and H.~S. Wilf, \emph{Combinatorial Algorithms for Computers and
  Calculators}, 2nd~ed.\hskip 1em plus 0.5em minus 0.4em\relax Academic Press,
  1978, vol. Chapter 23.

\bibitem{Liang}
H.~Liang and F.~S. Bai, ``A partially structure-preserving algorithm for the
  permanents of adjacency matrices of fullerence,'' \emph{Computer Physics
  Communications}, vol. 163, no.~2, pp. 79--84, Nov. 2004.

\bibitem{Aitken}
C.~Aitken, \emph{Determinants and Matrices}.\hskip 1em plus 0.5em minus
  0.4em\relax Edinburgh/London: Oliver and Boyd, 1964.

\bibitem{Grant}
A.~Grant, ``Rayleigh fading multiple-antenna channels,'' \emph{{EURASIP} J.
  Appl. Signal Process.}, no.~3, pp. 316--329, Mar. 2002.

\bibitem{Boyd}
S.~Boyd and L.~Vandenberghe, \emph{Convex Optimization}.\hskip 1em plus 0.5em
  minus 0.4em\relax Cambridge University Press, 2004.

\bibitem{Marshall}
A.~W. Marshall and I.~Olkin, \emph{Inequalities: Theory of Majorization and its
  Applications}.\hskip 1em plus 0.5em minus 0.4em\relax New York: Academic
  Press, 1979.

\end{thebibliography}


\newpage
\begin{figure*}
\centering
\begin{psfrags}%
\psfragscanon%
%
\small{
\psfrag{s05}[t][t]{\color[rgb]{0,0,0}\setlength{\tabcolsep}{0pt}\begin{tabular}{c}$ N(=N_t=N_r)$\end{tabular}}%
\psfrag{s06}[b][b]{\color[rgb]{0,0,0}\setlength{\tabcolsep}{0pt}\begin{tabular}{c}Multiplication Number\end{tabular}}%
\psfrag{s10}[l][l]{\color[rgb]{0,0,0}Polynomial+Rysers formula}%
\psfrag{s11}[l][l]{\color[rgb]{0,0,0}Definition}%
\psfrag{s12}[l][l]{\color[rgb]{0,0,0}Laplace expansion}%
\psfrag{s13}[l][l]{\color[rgb]{0,0,0}Rysers formula}%
\psfrag{s14}[l][l]{\color[rgb]{0,0,0}Polynomial+Definition}%
\psfrag{s15}[l][l]{\color[rgb]{0,0,0}Polynomial+Laplace expansion}%
\psfrag{s16}[l][l]{\color[rgb]{0,0,0}Polynomial+Rysers formula}%
%
\psfrag{x01}[t][t]{2}%
\psfrag{x02}[t][t]{3}%
\psfrag{x03}[t][t]{4}%
\psfrag{x04}[t][t]{5}%
\psfrag{x05}[t][t]{6}%
\psfrag{x06}[t][t]{7}%
\psfrag{x07}[t][t]{8}%
%
\psfrag{v01}[r][r]{$10^{0}$}%
\psfrag{v02}[r][r]{$10^{2}$}%
\psfrag{v03}[r][r]{$10^{4}$}%
\psfrag{v04}[r][r]{$10^{6}$}%
\psfrag{v05}[r][r]{$10^{8}$}%
\psfrag{v06}[r][r]{$10^{10}$}%
%
\resizebox{12cm}{!}{\includegraphics{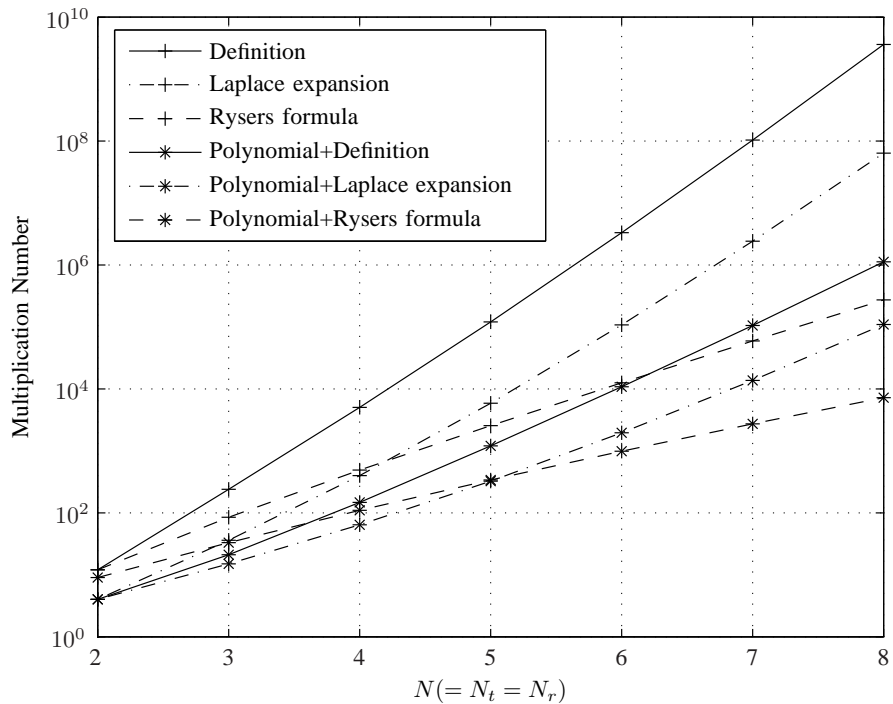}}%
}
\end{psfrags}%
\caption{Comparison of the number of required multiplications for
calculating ${\Tilde C}_{u}({\boldsymbol \lambda})$ with the
polynomial-based algorithms and the direct computation algorithms.}
\label{fig:fig0}
\end{figure*}
%
\begin{figure*}
\centering
\begin{psfrags}%
\psfragscanon%
%
\small{
\psfrag{s05}[t][t]{\color[rgb]{0,0,0}\setlength{\tabcolsep}{0pt}\begin{tabular}{c}SNR (in dB)\end{tabular}}%
\psfrag{s06}[b][b]{\color[rgb]{0,0,0}\setlength{\tabcolsep}{0pt}\begin{tabular}{c}Ergodic Capacity (in bps/Hz)\end{tabular}}%
\psfrag{s10}[l][l]{\color[rgb]{0,0,0}Upper bound for Kronecker channel}%
\psfrag{s11}[l][l]{\color[rgb]{0,0,0}Jointly correlated channel}%
\psfrag{s12}[l][l]{\color[rgb]{0,0,0}    Upper bound for jointly correlated channel}%
\psfrag{s13}[l][l]{\color[rgb]{0,0,0}Kronecker channel}%
\psfrag{s14}[l][l]{\color[rgb]{0,0,0}Upper bound for Kronecker channel}%
%
\psfrag{x01}[t][t]{2}%
\psfrag{x02}[t][t]{4}%
\psfrag{x03}[t][t]{6}%
\psfrag{x04}[t][t]{8}%
\psfrag{x05}[t][t]{10}%
\psfrag{x06}[t][t]{12}%
\psfrag{x07}[t][t]{14}%
\psfrag{x08}[t][t]{16}%
%
\psfrag{v01}[r][r]{2}%
\psfrag{v02}[r][r]{4}%
\psfrag{v03}[r][r]{6}%
\psfrag{v04}[r][r]{8}%
\psfrag{v05}[r][r]{10}%
\psfrag{v06}[r][r]{12}%
\psfrag{v07}[r][r]{14}%
\psfrag{v08}[r][r]{16}%
\psfrag{v09}[r][r]{18}%
%
\resizebox{12cm}{!}{\includegraphics{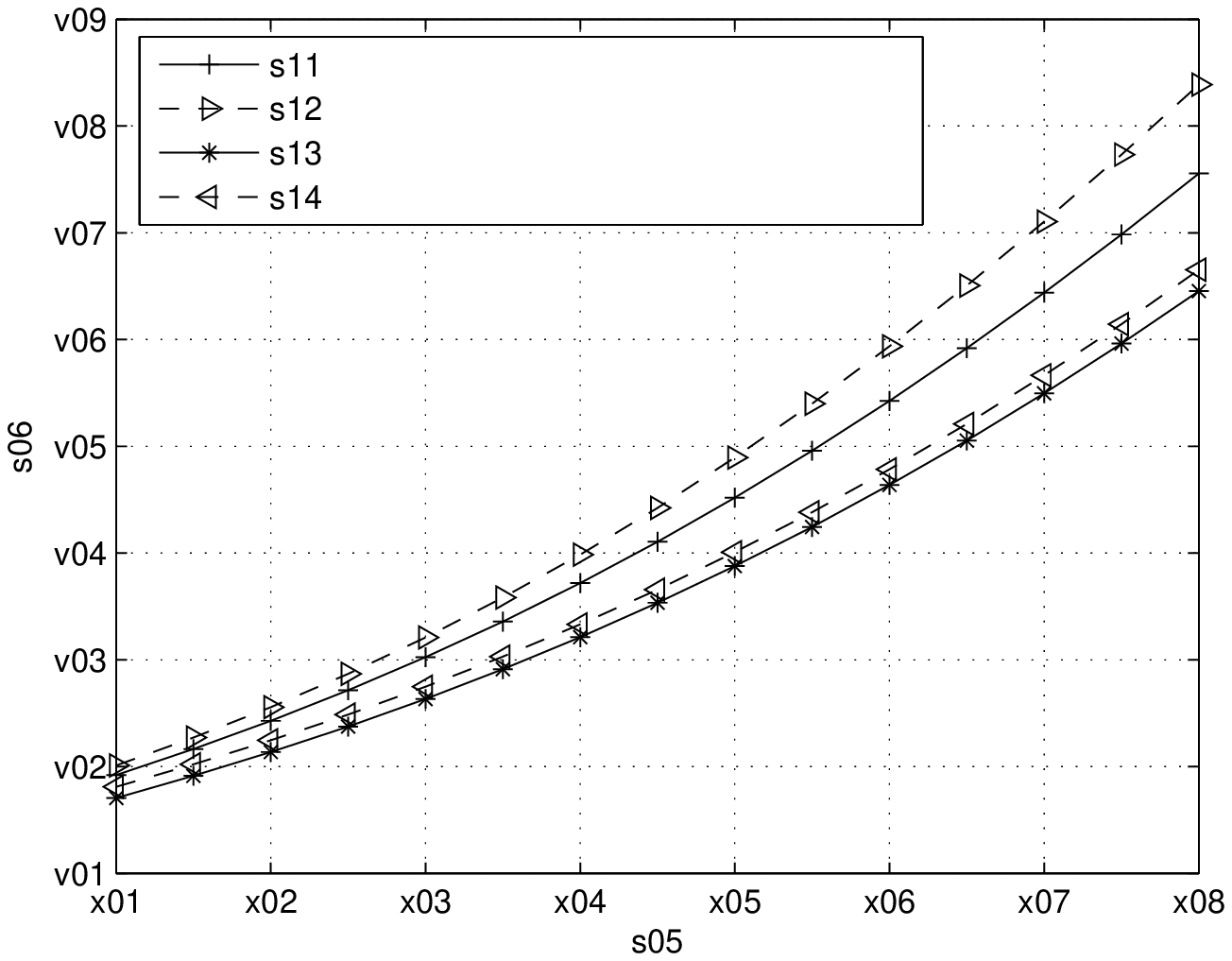}}%
}
\end{psfrags}%
\caption{Comparison of the exact ergodic mutual information and the
mutual information upper bound.  Results are shown for the
jointly-correlated channel model and the Kronecker model, with
equal-power allocation (i.e.\ ${\boldsymbol \lambda} = {\bf 1}$).}
\label{fig:fig1}
\end{figure*}

\begin{figure*}
\centering
\begin{psfrags}%
\psfragscanon%
%
\small{
\psfrag{s05}[t][t]{\color[rgb]{0,0,0}\setlength{\tabcolsep}{0pt}\begin{tabular}{c}SNR (in dB)\end{tabular}}%
\psfrag{s06}[b][b]{\color[rgb]{0,0,0}\setlength{\tabcolsep}{0pt}\begin{tabular}{c}Ergodic Capacity (in bps/Hz)\end{tabular}}%
\psfrag{s10}[l][l]{\color[rgb]{0,0,0}Beamforming}%
\psfrag{s11}[l][l]{\color[rgb]{0,0,0}   Numerical solution of (14)}%
\psfrag{s12}[l][l]{\color[rgb]{0,0,0}IWFA}%
\psfrag{s13}[l][l]{\color[rgb]{0,0,0}Upper bound}%
\psfrag{s14}[l][l]{\color[rgb]{0,0,0}Equal power allocation}%
\psfrag{s15}[l][l]{\color[rgb]{0,0,0}Beamforming}%
%
\psfrag{x01}[t][t]{2}%
\psfrag{x02}[t][t]{4}%
\psfrag{x03}[t][t]{6}%
\psfrag{x04}[t][t]{8}%
\psfrag{x05}[t][t]{10}%
\psfrag{x06}[t][t]{12}%
\psfrag{x07}[t][t]{14}%
\psfrag{x08}[t][t]{16}%
%
\psfrag{v01}[r][r]{2}%
\psfrag{v02}[r][r]{4}%
\psfrag{v03}[r][r]{6}%
\psfrag{v04}[r][r]{8}%
\psfrag{v05}[r][r]{10}%
\psfrag{v06}[r][r]{12}%
\psfrag{v07}[r][r]{14}%
\psfrag{v08}[r][r]{16}%
\psfrag{v09}[r][r]{18}%
%
\resizebox{12cm}{!}{\includegraphics{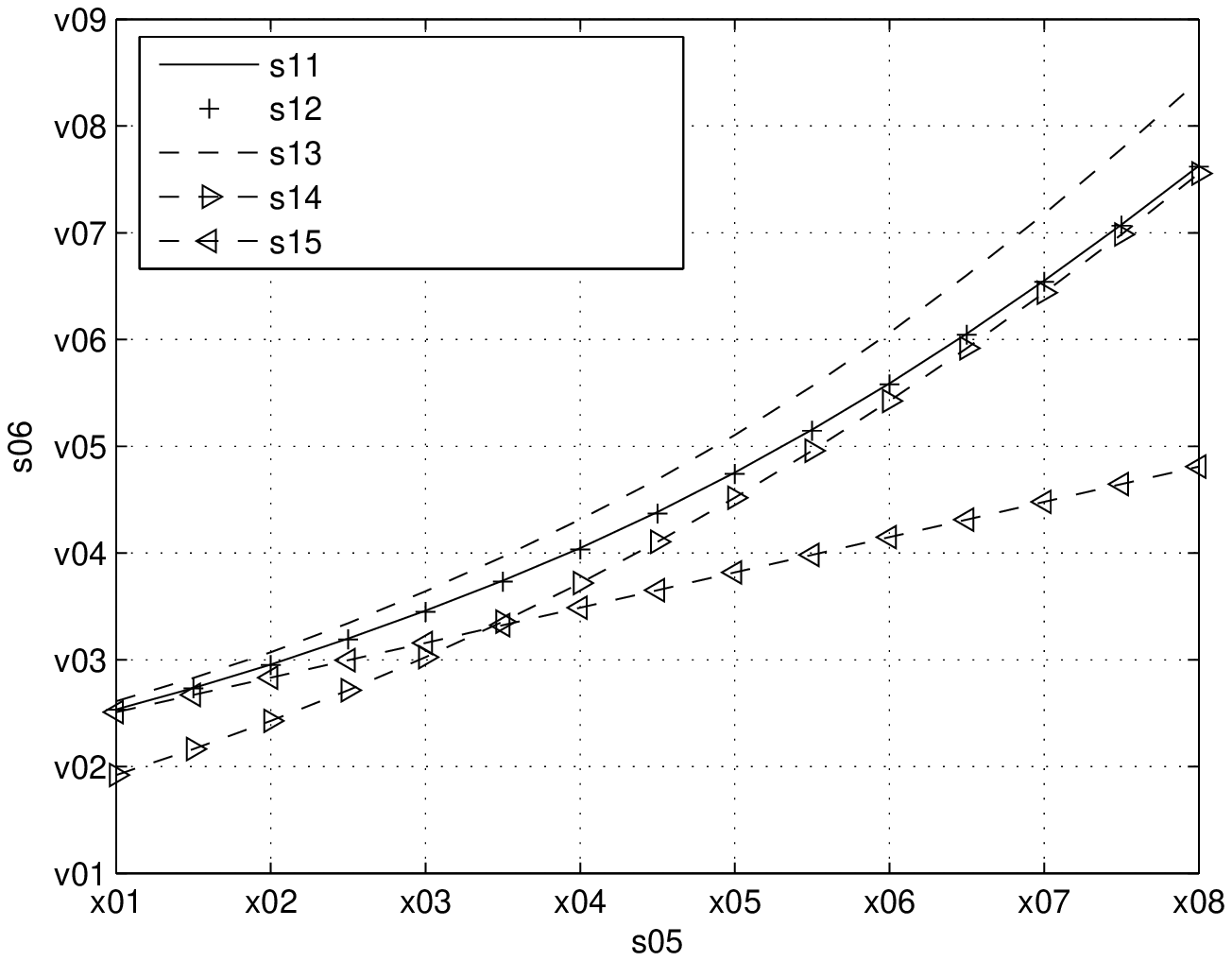}}%
}
\end{psfrags}%
\caption{Comparison of the ergodic capacity of the
jointly-correlated MIMO channel achieved by numerically solving
(\ref{capacity_optimized}), and our proposed iterative water-filling
algorithm under SET.  The capacity upper bound and the information
rates achieved by equal power allocation and beamforming are also
shown. } \label{fig:fig2}
\end{figure*}

\begin{figure*}
\centering
\begin{psfrags}%
\psfragscanon%
%
\small{
\psfrag{s05}[t][t]{\color[rgb]{0,0,0}\setlength{\tabcolsep}{0pt}\begin{tabular}{c}SNR (in dB)\end{tabular}}%
\psfrag{s06}[b][b]{\color[rgb]{0,0,0}\setlength{\tabcolsep}{0pt}\begin{tabular}{c}Ergodic Capacity (in bps/Hz)\end{tabular}}%
\psfrag{s10}[l][l]{\color[rgb]{0,0,0}Beamforming}%
\psfrag{s11}[l][l]{\color[rgb]{0,0,0}   Numerical solution of (14)}%
\psfrag{s12}[l][l]{\color[rgb]{0,0,0}IWFA}%
\psfrag{s13}[l][l]{\color[rgb]{0,0,0}Upper bound}%
\psfrag{s14}[l][l]{\color[rgb]{0,0,0}Equal power allocation}%
\psfrag{s15}[l][l]{\color[rgb]{0,0,0}Beamforming}%
%
\psfrag{x01}[t][t]{2}%
\psfrag{x02}[t][t]{4}%
\psfrag{x03}[t][t]{6}%
\psfrag{x04}[t][t]{8}%
\psfrag{x05}[t][t]{10}%
\psfrag{x06}[t][t]{12}%
\psfrag{x07}[t][t]{14}%
\psfrag{x08}[t][t]{16}%
%
\psfrag{v01}[r][r]{2}%
\psfrag{v02}[r][r]{4}%
\psfrag{v03}[r][r]{6}%
\psfrag{v04}[r][r]{8}%
\psfrag{v05}[r][r]{10}%
\psfrag{v06}[r][r]{12}%
\psfrag{v07}[r][r]{14}%
%
\resizebox{12cm}{!}{\includegraphics{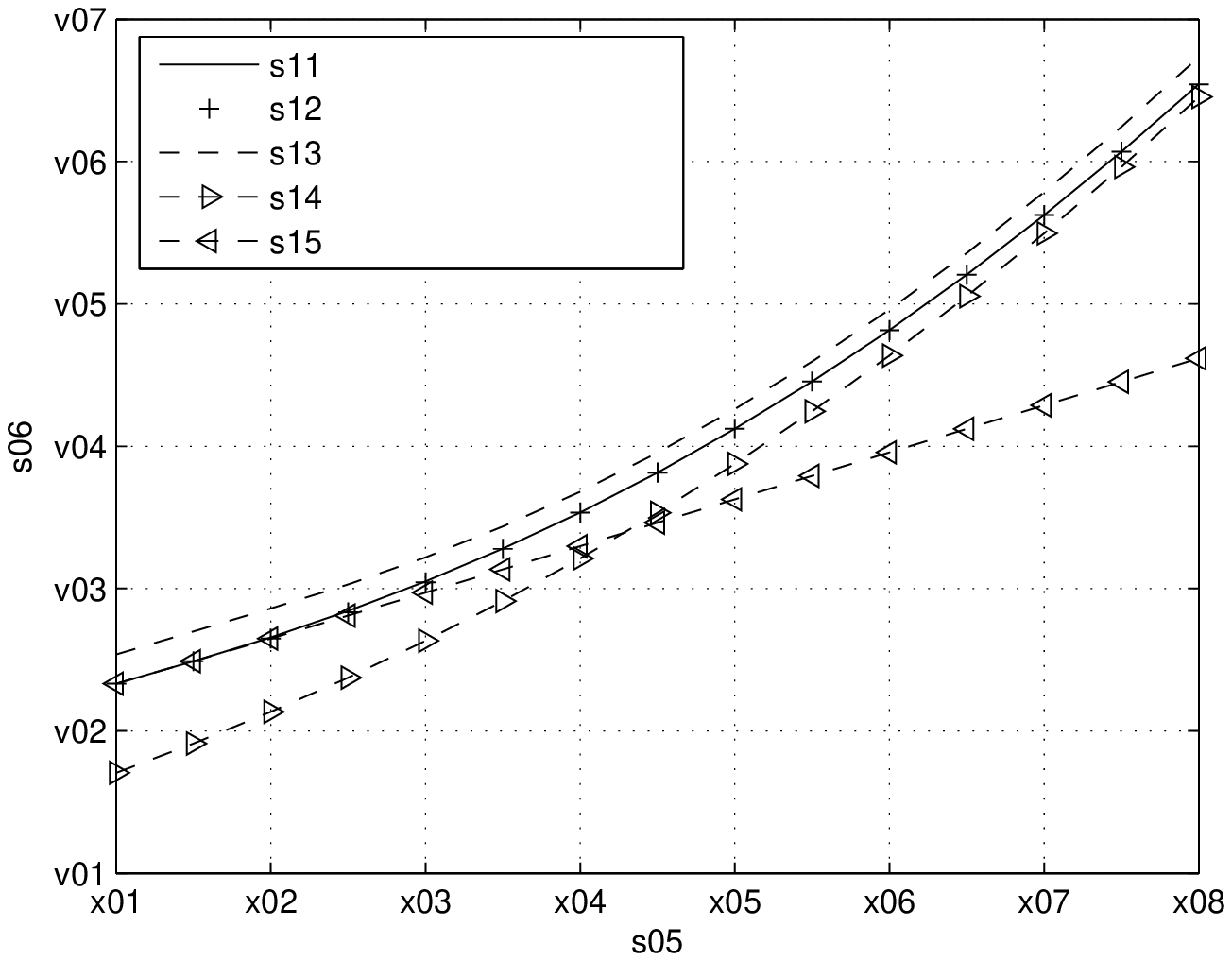}}%
}
\end{psfrags}%
\caption{Comparison of the ergodic capacity of the Kronecker MIMO
channel achieved by numerically solving (\ref{capacity_optimized}),
and our proposed iterative water-filling algorithm under SET.  The
capacity upper bound and the information rates achieved by equal
power allocation and beamforming are also shown.} \label{fig:fig3}
\end{figure*}

\begin{figure*}
\centering
\begin{psfrags}%
\psfragscanon%
%
\small{
\psfrag{s07}[t][t]{\color[rgb]{0,0,0}\setlength{\tabcolsep}{0pt}\begin{tabular}{c}Iteration ($ k $)\end{tabular}}%
\psfrag{s08}[b][b]{\color[rgb]{0,0,0}\setlength{\tabcolsep}{0pt}\begin{tabular}{c}$\lambda_{i}^{k}$\end{tabular}}%
\psfrag{s12}[t][t]{\color[rgb]{0,0,0}\setlength{\tabcolsep}{0pt}\begin{tabular}{c}Iteration ($k$)\end{tabular}}%
\psfrag{s13}[b][b]{\color[rgb]{0,0,0}\setlength{\tabcolsep}{0pt}\begin{tabular}{c}$\tilde{C}_{u}({\bf \lambda})$\end{tabular}}%
\psfrag{s14}[l][l]{\color[rgb]{0,0,0}$i = 5$}%
\psfrag{s15}[l][l]{\color[rgb]{0,0,0}$ i = 1$}%
\psfrag{s16}[l][l]{\color[rgb]{0,0,0}$i = 2$}%
\psfrag{s17}[l][l]{\color[rgb]{0,0,0}$ i = 3$}%
\psfrag{s18}[l][l]{\color[rgb]{0,0,0}$ i = 4$}%
\psfrag{s19}[l][l]{\color[rgb]{0,0,0}$i = 5$}%
%
\psfrag{x01}[t][t]{0}%
\psfrag{x02}[t][t]{1}%
\psfrag{x03}[t][t]{2}%
\psfrag{x04}[t][t]{3}%
\psfrag{x05}[t][t]{4}%
\psfrag{x06}[t][t]{5}%
\psfrag{x07}[t][t]{6}%
\psfrag{x08}[t][t]{0}%
\psfrag{x09}[t][t]{1}%
\psfrag{x10}[t][t]{2}%
\psfrag{x11}[t][t]{3}%
\psfrag{x12}[t][t]{4}%
\psfrag{x13}[t][t]{5}%
\psfrag{x14}[t][t]{6}%
%
\psfrag{v01}[r][r]{9.8}%
\psfrag{v02}[r][r]{10}%
\psfrag{v03}[r][r]{10.2}%
\psfrag{v04}[r][r]{10.4}%
\psfrag{v05}[r][r]{10.6}%
\psfrag{v06}[r][r]{0}%
\psfrag{v07}[r][r]{0.5}%
\psfrag{v08}[r][r]{1}%
\psfrag{v09}[r][r]{1.5}%
\psfrag{v10}[r][r]{2}%
%
\resizebox{12cm}{!}{\includegraphics{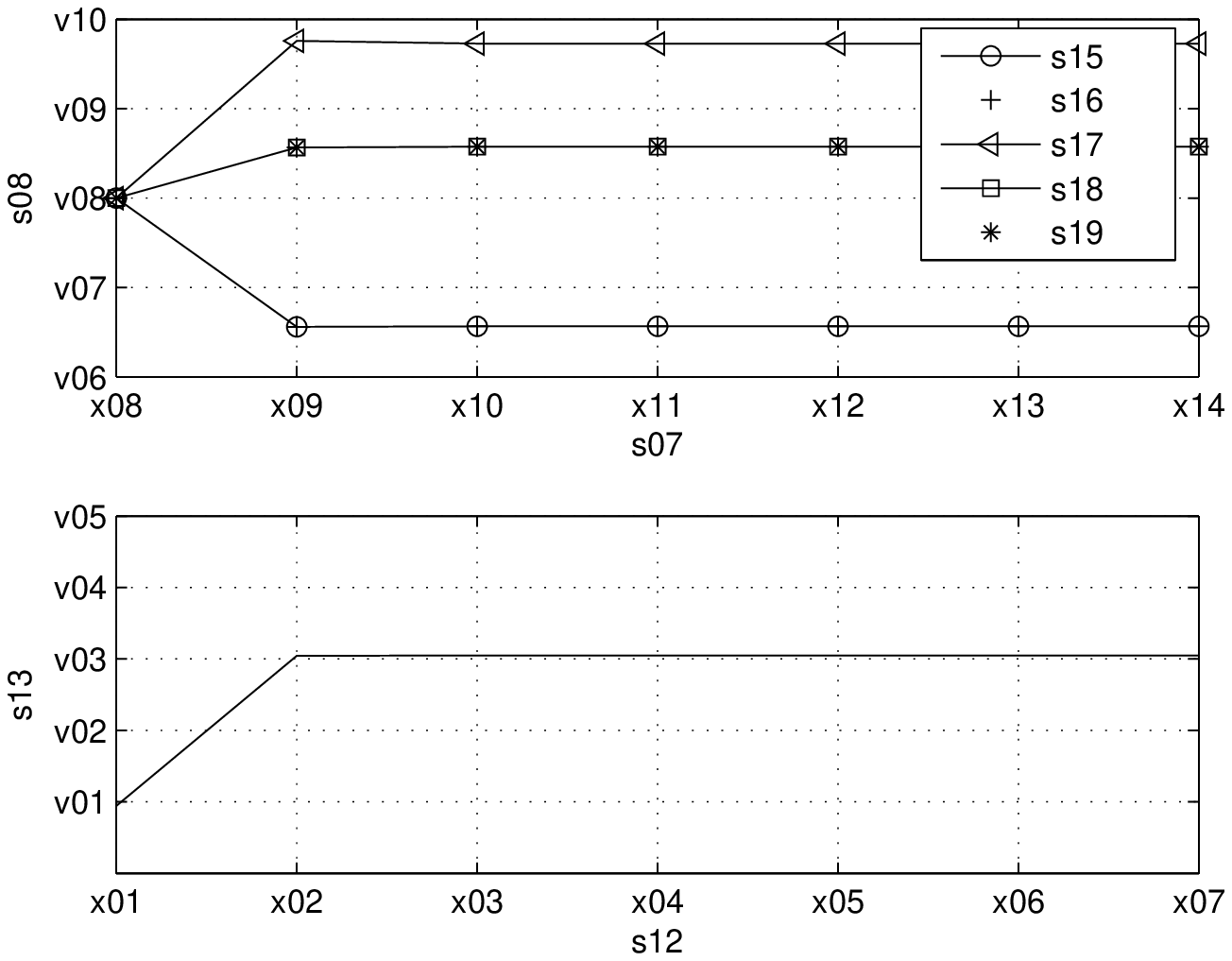}}%
}
\end{psfrags}%
\caption{Convergence of the iterative water-filling algorithm for
optimal power allocation in the jointly-correlated channel.  Results
are shown for SNR = 10 dB. }
 \label{fig:fig4}
\end{figure*}

\begin{figure*}
\centering
\begin{psfrags}%
\psfragscanon%
%
\small{
\psfrag{s07}[t][t]{\color[rgb]{0,0,0}\setlength{\tabcolsep}{0pt}\begin{tabular}{c}Iteration ($ k $)\end{tabular}}%
\psfrag{s08}[b][b]{\color[rgb]{0,0,0}\setlength{\tabcolsep}{0pt}\begin{tabular}{c}$\lambda_{i}^{k}$\end{tabular}}%
\psfrag{s12}[t][t]{\color[rgb]{0,0,0}\setlength{\tabcolsep}{0pt}\begin{tabular}{c}Iteration ($k$)\end{tabular}}%
\psfrag{s13}[b][b]{\color[rgb]{0,0,0}\setlength{\tabcolsep}{0pt}\begin{tabular}{c}$\tilde{C}_{u}({\bf \lambda})$\end{tabular}}%
\psfrag{s14}[l][l]{\color[rgb]{0,0,0}$i = 5$}%
\psfrag{s15}[l][l]{\color[rgb]{0,0,0}$ i = 1$}%
\psfrag{s16}[l][l]{\color[rgb]{0,0,0}$i = 2$}%
\psfrag{s17}[l][l]{\color[rgb]{0,0,0}$ i = 3$}%
\psfrag{s18}[l][l]{\color[rgb]{0,0,0}$ i = 4$}%
\psfrag{s19}[l][l]{\color[rgb]{0,0,0}$i = 5$}%
%
\psfrag{x01}[t][t]{0}%
\psfrag{x02}[t][t]{1}%
\psfrag{x03}[t][t]{2}%
\psfrag{x04}[t][t]{3}%
\psfrag{x05}[t][t]{4}%
\psfrag{x06}[t][t]{5}%
\psfrag{x07}[t][t]{6}%
\psfrag{x08}[t][t]{0}%
\psfrag{x09}[t][t]{1}%
\psfrag{x10}[t][t]{2}%
\psfrag{x11}[t][t]{3}%
\psfrag{x12}[t][t]{4}%
\psfrag{x13}[t][t]{5}%
\psfrag{x14}[t][t]{6}%
%
\psfrag{v01}[r][r]{8}%
\psfrag{v02}[r][r]{8.2}%
\psfrag{v03}[r][r]{8.4}%
\psfrag{v04}[r][r]{8.6}%
\psfrag{v05}[r][r]{0.5}%
\psfrag{v06}[r][r]{1}%
\psfrag{v07}[r][r]{1.5}%
\psfrag{v08}[r][r]{2}%
\psfrag{v09}[r][r]{2.5}%
\psfrag{v10}[r][r]{3}%
%
\resizebox{12cm}{!}{\includegraphics{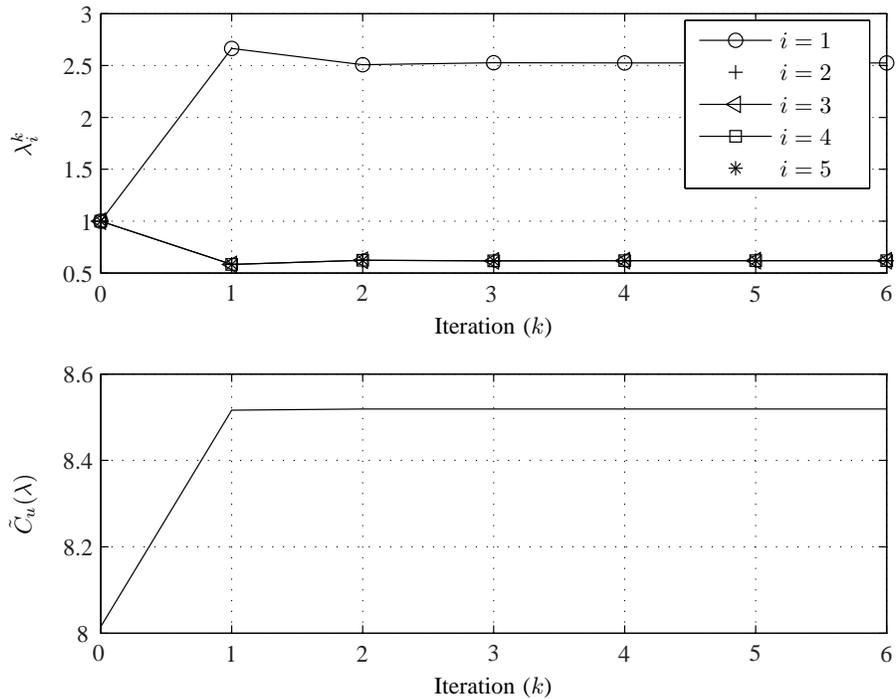}}%
}
\end{psfrags}%
\caption{Convergence of the iterative water-filling algorithm for
optimal power allocation in the Kronecker channel. Results are shown
for SNR=10 dB.} \label{fig:fig5}
\end{figure*}

\end{document}